\documentclass[10pt,journal]{IEEEtran}
\IEEEoverridecommandlockouts
\usepackage{cite}
\usepackage{amsmath,amssymb,amsfonts}
\usepackage{textcomp}
\usepackage{xcolor}
\usepackage[utf8]{inputenc}
\usepackage{amsmath,amssymb}
\usepackage{float}
\usepackage{epsfig}
\usepackage{graphics}
\usepackage{graphicx}
\usepackage{color}
\usepackage{latexsym}
\usepackage{epstopdf}
\usepackage{subcaption}
\usepackage{algorithm}
\usepackage{algpseudocode}
\usepackage{pifont}
\usepackage{bbm}
\usepackage{textcomp}
\usepackage{amsthm}


\usepackage{url}


\newtheorem{definition}{Definition}

\DeclareMathOperator*{\argmax}{arg\,max}

\IEEEoverridecommandlockouts
\newtheorem{theorem}{Theorem}[]
\newtheorem{lemma}[theorem]{Lemma}

\title{Strategy-Proof Spectrum Allocation among\\ Multiple Operators  for Demand Varying\\ Wireless Networks\IEEEauthorrefmark{2}\thanks{\IEEEauthorrefmark{2}This paper is a substantially
expanded and revised version of the work in \cite{multiple_op_sp}.}}
 \author{Indu~Yadav, 
         Ankur~A.~Kulkarni,
         ~ Abhay~Karandikar
         \thanks{Indu Yadav and Abhay Karandikar are with the Department
of Electrical Engineering, Indian Institute of Technology Bombay,
Mumbai, 400076, India. e-mail: {$\lbrace$indu, karandi$\rbrace$}@ee.iitb.ac.in.
Ankur A. Kulkarni is with Systems and Control Department, Indian Institute of Technology Bombay, Mumbai,400076, India.
e-mail: kulkarni.ankur@iitb.ac.in.
Abhay Karandikar is currently Director, Indian Institute of Technology Kanpur (on leave from IIT Bombay),
Kanpur, 208016, India. e-mail:karandi@iitk.ac.in.}\vspace{-0.4cm}}

\begin{document}

\maketitle

\begin{abstract}
To address the exponentially increasing data rate demands of end users,
necessitates efficient spectrum allocation among co-existing operators in licensed and unlicensed spectrum bands to cater to the 
temporal and spatial variations of traffic in the wireless network. 
In this paper, we address the spectrum allocation problem among non-cooperative operators via auctions. The classical Vickrey-Clarke-Groves (VCG)
approach provides the framework for a strategy-proof and social welfare maximizing auction at high computational complexity, which makes it infeasible
for practical implementation. We propose sealed bid auction mechanisms for spectrum allocation which are computationally tractable and hence 
applicable for allocating spectrum by performing auctions in short durations 
as per the dynamic load variations of the network. We  establish that the proposed algorithm is strategy-proof for uniform demand.
Furthermore, for non-uniform demand we propose a algorithm that satisfies weak strategy-proofness. We also consider non-linear 
increase in the marginal valuations with demand.
Simulation results are presented to exhibit the performance comparison of the proposed algorithms with VCG and other existing mechanisms.

\end{abstract}


\section{Introduction}
With recent advancements in wireless communication technologies, the telecom market has witnessed exponential growth in data traffic in the past few decades. 
As per the current trends, mobile data traffic is expected to increase more than $5$ times by $2024$  \cite{ericsson}. Globally, Fifth Generation
(5G) technology will further escalate the amount of data traffic. With the rapid development of smart devices, 
the end user data rate requirements have also become stringent. Fulfilling the increasing number of end users with the desired Quality
of Services (QoSs) has further contributed to the crisis of limited, scarce and expensive ``spectrum''. To cater to the requirements of the end 
users, there are two possibilities: additional spectrum availability or efficient utilization of currently available spectrum.

Traditionally, the spectrum is allocated statically on lease for long durations such as one year or more to the service providers\footnote{The terminologies “service provider” and “operator” have been used
interchangeably throughout the paper}. Usually, service providers estimate the peak traffic conditions of the network and calculate the quantum of spectrum accordingly.
However, the peak traffic requirements arise sporadically in the network. This leads to underutilization of spectrum usage in the long run.
Therefore, the static allocation technique of spectrum  is 
highly inefficient in terms of spectrum utilization and not suitable to meet the requirements of the next generations networks.
Moreover, it has been shown that the traffic conditions in a wireless network vary as a function of time and location \cite{paul}. 
For instance, while on any regular day, peak traffic in residential areas  is more likely to occur in the evening,
whereas in office areas one may observe peak traffic during business hours. Thus, the wireless networks observe significant peak
to average  traffic ratio \cite{traffic_characterization}.

To address the issue of inefficient spectrum usage a computationally efficient spectrum allocation mechanism is required so that spectrum can 
be allocated dynamically considering 
the spatial and temporal traffic variations in the network. Auctions are commonly preferred for spectrum allocation among multiple operators.
In our work, we focus on computationally efficient 
spectrum allocation mechanisms for spectrum distribution among multiple operators, to ensure that the spectrum is allocated quickly as per service 
providers' demands.

In general, the spectrum is allocated among the operators using sealed bid auction format. In sealed bid auctions, interested buyers send their
valuations for the object in a closed envelope along with the demand, to the auctioneer. Thus, the privacy of the valuation and demand for the object
are ensured for each service provider. In spectrum auctions, spectrum valuation for a service provider depends on the desired bandwidth and 
 on other factors such as the number of subscribers and the services
desired by the subscribers. Hence, the spectrum valuation is a private information of an operator which is not known to the auctioneer. Generally,
the participants in any auction are selfish and are likely to misreport the actual valuation to the auctioneer if there is incentive to do so.
Hence, ensuring the strategy-proofness of auctions is of significant importance \cite{krishna2009auction}. An auction is said to be strategy-proof
if any operator does not gain on deviating from the true or actual value of their demands of the spectrum. This implies that even if 
an operator misreports its valuation, it can never achieve utility greater than that of the true valuation.

Strategy-proof auctions not only compel the participants to reveal their true valuations but also makes the process
of spectrum allocation easier for the auctioneer and the operators. The operators are neither required to perform complex computations nor they have to invest
time to determine the optimal bidding strategy to maximize their utility gains. Hence, it makes the process of resource allocation faster by removing 
the time and computational overhead. Moreover, strategy-proofness also increases the number of participants in an auction.

In spectrum auctions, three properties are of utmost importance: strategy-proofness, low computational complexity,
and optimality of allocation to maximize the social welfare \cite{roughgarden2016twenty}. Unfortunately, achieving all three properties simultaneously
in an auction is provably NP-Hard \cite{ebay}. Vickrey Clarke Groves (VCG) \cite{vickrey,clarke,groves} is a well-known mechanism which proposes a framework for
guaranteeing strategy-proof behavior in auctions with optimal allocation strategy, but it is computationally infeasible in large networks \cite{roughgarden2016twenty}. 

Various Dynamic Spectrum Allocation (DSA) mechanisms proposed in the literature are designed for single parameter environment \cite{roughgarden2016twenty},
 considering one base station per operator. In these works, it is assumed that individual base stations participate in the spectrum auction.
However, in general, each operator deploys multiple base stations (BSs) in a geographical region to cater to the services of the end users. 
The valuation of an operator depends on the number of BSs and the amount of traffic each operator has to serve. 
Unlike most of the existing works e.g., \cite{ebay, small, trust, satya}, we consider the problem of spectrum allocation at the operator level. 
Unfortunately, achieving strategy-proof behavior at 
operator level is more challenging as the BSs associated with an operator are cooperative in their behavior. Thus, an operator may misreport the valuation 
and demand at few BSs to increase the overall utility gain. 
Most of the existing works in spectrum allocation do not address this aspect of spectrum allocation problem. 

We focus on designing 
efficient strategy-proof mechanisms which are suitable for implementation in short durations to handle the spatio-temporal load variations of the network.
First, we propose a strategy-proof mechanism where the demand at each BS is of single channel.
Next, we extend the mechanism for multiple channel availability with non-uniform channel requirement (demand) across the BSs  and linearly 
increasing valuations with demand.  Here, we discuss the scenario when strategy-proofness of the mechanism may not be ensured.
Finally, we propose NUD-WSPAM where BSs of an operator may have different demands and per channel valuation is non-increasing  at
each BS. Here, we introduce the concept of weak strategy-proofness. We also prove the individual rationality, monotonicity and 
weak strategy-proofness of NUD-WSPAM.

Monte Carlo simulations are performed in MATLAB \cite{matlab} to evaluate the performances of the proposed spectrum allocation mechanisms. Using 
simulation results, social welfare and spectrum utilization of the proposed algorithms in comparison to other algorithms in the
literature e.g. \cite{small} are also evaluated. Simulations are also performed for large network sizes (i.e., large number of BSs)
to validate the applicability in practical scenarios.


\subsection{Related Work}
In this section, we review some related work on Dynamic Spectrum Access (DSA). Auction-based spectrum allocation approaches have been extensively 
studied in the
literature \cite{gandhi2008towards,ji2007cognitive,clemens2005intelligent,etkin2007spectrum,qiu2016demand,subramanian2008near,gopinathan}. 
As stated above, achieving strategy-proof optimal allocation and computational feasibility in a mechanism is NP-Hard.
In \cite{subramanian2008near}, the authors present a DSA mechanism in cellular networks which achieves near-optimal allocation for revenue 
maximization using greedy graph coloring approach. The authors in \cite{gandhi2008towards} studies real-time spectrum allocation mechanism. 
Though, the mechanisms proposed in \cite{subramanian2008near},\cite{gandhi2008towards} are computationally feasible in terms of implementation, 
they are not guaranteed to be strategy-proof. In \cite{gopinathan}, the authors propose a mechanism which ensures a certain fair chance of spectrum 
allocation along with the maximization of social welfare. In \cite{revenue}, the authors propose a revenue maximization mechanism for spectrum 
allocation. For revenue maximization, the combination of well known Vickrey-Clarke-Groves (VCG) \cite{vickrey,clarke,groves} mechanism and  
Myerson's Lemma \cite{myerson} are studied.
In \cite{ebay}, the authors proposed VERITAS, a sealed bid strategy-proof auction mechanism which follows a certain monotonicity behavior. 
The authors in \cite{small} propose another strategy-proof mechanism SMALL which groups 
non-conflicting base stations and sacrifices the base station(s) corresponding to the lowest bid in the winner group. SMALL has better allocation 
efficiency than that of the algorithm proposed in \cite{ebay}. In \cite{gandhi2008towards}, the authors propose an auction-based approach for fine grained 
(i.e., a channel is sliced into smaller frequencies) channel allocation. However, it does not satisfy the strategy-proofness property. As 
interference is one of the major concerns in wireless, the authors in \cite{clemens2005intelligent}, propose an auction based power 
allocation mechanism. However, it fails to be strategy-proof.

Both VERITAS \cite{ebay} and SMALL \cite{small} assume that the channel valuation increases linearly with the demand. In \cite{trust,promise,tames}, 
strategy-proof double auction mechanisms are studied. The authors in \cite{kasbekar,yi2015multi} 
studies auction-based approaches for DSA in cognitive networks. In \cite{ji2007cognitive}, the game-theoretic aspect of the DSA in cognitive networks is explored. 

The authors in \cite{adaptivechannel} consider adaptive-width spectrum allocation problem where the channel valuation is a non-increasing function of
the demand. 
To take the decrease in valuation with the demand into account, strategy-proof mechanism SPECIAL is proposed. Here, it is assumed that all the base
stations bid for all the channels available for auction. To improve the social welfare and revenue of VERITAS, the concept of reserve price in 
valuation is incorporated in \cite{satya}.

Most of the existing works is centered on designing a computationally feasible strategy-proof spectrum auction mechanism for non-cooperative
base station participation in auctions. Moreover,
 \cite{ebay,small,gopinathan,subramanian2008near,yi2015multi,trust,promise,tames,adaptivechannel} consider base stations with uniform channel demand.
However, only few works  \cite{ebay, small,adaptivechannel} consider multiple channel demand across the BSs. Except \cite{adaptivechannel},
all the works assume that the channel valuation scales linearly with the demand, which may not be true in general as throughput may not increase linearly 
as a function of bandwidth.

To the best of our knowledge, none of the previous works has considered the operators as the players in the spectrum auction. In comparison, in our 
work, we consider that
non-cooperative and rational operators participate in spectrum auctions and each operator has multiple BSs.
Our work also considers non-uniform channel requirement at the BSs.

\subsection{Contributions}
In this paper, we investigate the problem of spectrum allocation using sealed bid auction across multiple BSs of coexisting multiple operators in a geographical
region. We summarize our contributions as follows.
\begin{itemize}
\item We consider the problem of spectrum allocation among coexisting multiple operators in a region. The base 
stations associated with each operator are used to provide services to the end users. We formulate the problem in  multi-parameter 
environment to maximize the total social welfare of the auction. This has not been addressed in the literature so far.
\item We propose a strategy-proof spectrum allocation mechanism at operator level, where strategy-proofness holds for 
a set of valuations submitted to auctioneer corresponding to each operator.
\item We propose computationally efficient auction mechanism which are applicable to perform auction repeatedly in short durations as per 
traffic variation.
\item We propose a generalized spectrum allocation mechanism which is weakly strategy-proof even if the spectrum demands are not same
across the BSs of an operator. Further, we also consider the case where the channel valuation may not be linearly increasing with the 
demand of the channels at a base station.
\item We analytically prove that the proposed mechanism follows monotonicity, individual rationality and (weak) strategy-proofness.
\item We compare the performance of the proposed mechanism with various mechanisms in small as well as in large networks using Monte Carlo
simulations.
\end{itemize}

\par The rest of the paper is organized as follows. Section \ref{sec:sys_model} describes the system model and preliminaries of strategy-proof auctions.
In Section \ref{sec:mec_design}, we propose a mechanism for single channel allocation.  
Section \ref{sec:non-uniform_nsp} proposes an
extension of the mechanism proposed in Section \ref{sec:mec_design} and describes how it fails to be strategy-proof through an example.
In Section \ref{sec:non-uniform}, the generalized strategy-proof spectrum
allocation mechanism is presented. We summarize the proposed mechanisms in Section \ref{sec:summary}.
In Section \ref{sec:sim_results}, we evaluate the performance of proposed mechanisms through simulations. In Section \ref{sec:conclude},
we conclude the paper.

\section[\textwidth=16 cm]{System Model}
\label{sec:sys_model}

\begin{figure}[h]
\centering
\includegraphics[width = 0.27\textwidth]{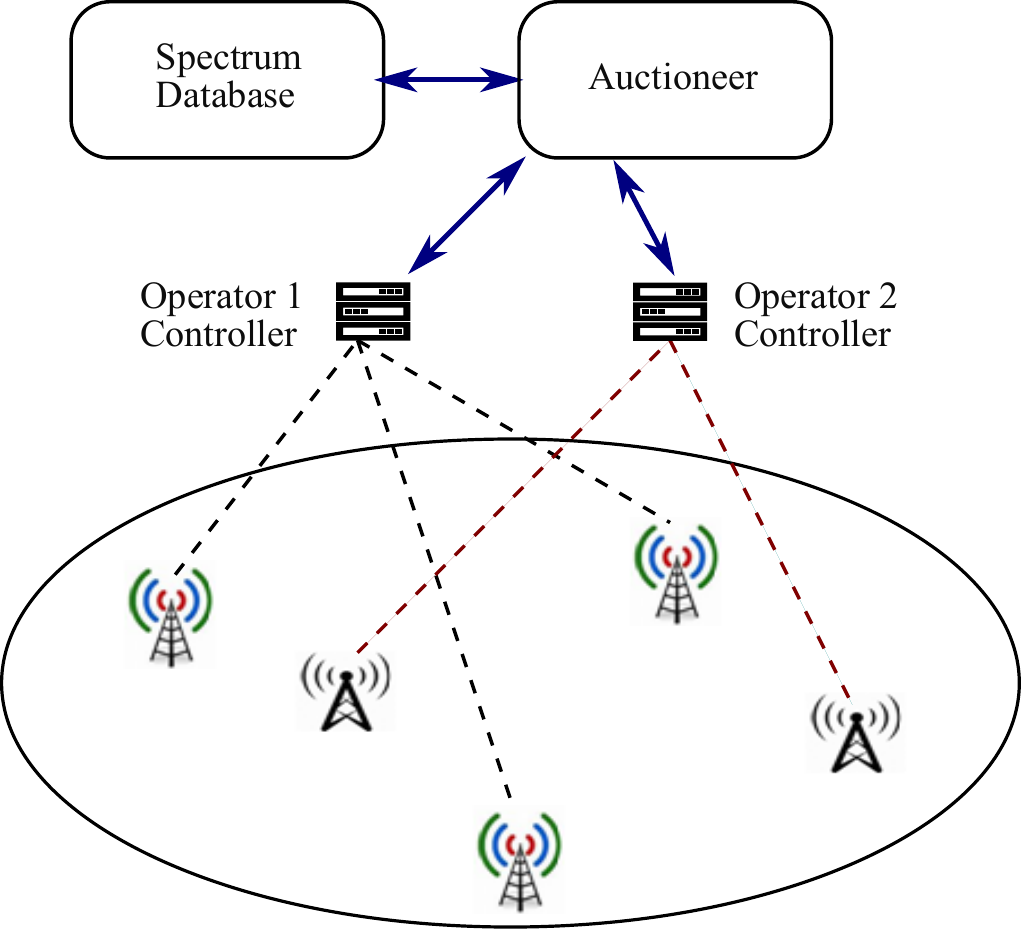}
\caption{System model.}
\label{sys_model}
\end{figure}  
We consider a geographical region where multiple operators provide services to the end users. Multiple BSs are associated with each
operator in the given region. The system model (Fig. \ref{sys_model}) comprises a 
controller for each operator, set of BSs associated with the operators, auctioneer and spectrum database. There are two decision making 
devices, controllers and auctioneer in the system. Each operator has a controller which determines the number of 
channels (demand)  required and the valuation of channels at the BSs associated with the operator.
The demand and the valuation may vary over time depending on the traffic conditions of the wireless network. The operators communicate their 
spectrum demand and valuation at each base station through the controller. The information of the number of channels
available for allocation is contained in the spectrum database. We assume that the channels are of equal bandwidth and are orthogonal. Since 
orthogonal channels do not have overlapping frequency bands, simultaneous operations on orthogonal channels do not cause interference.
Auctioneer is another decision making entity, which decides who should get the spectrum (channel) and what should be the appropriate price for 
providing exclusive {\textquoteleft right to use\textquoteright} channel to an operator.

In our work, unlike the other existing works, operators are bidders (players) instead of individual BSs in the
wireless network. Each operator communicates a vector of bids and demands to the auctioneer via the controller for the BSs associated with.

\par Other assumptions made in our system model are as follows.

 \noindent $\bullet$ We assume that an auctioneer has knowledge of the topology in the geographical region. Therefore, the overall conflict graph 
 consisting of all the BSs participating in the auction is available to the auctioneer.\\
 \noindent $\bullet$ We assume all channels are homogeneous in characteristics and act as substitutes. Thus, the bid or valuation is channel independent.\\ 
 \noindent $\bullet$ We consider that operators employ Fractional Frequency Reuse (FFR) techniques to cancel interference across its own 
 BSs. Therefore, same frequency band (channel) can be allocated to the BSs of an operator. 
 Hence, any base station of an operator would experience interference only from the BSs associated with other operators in the given region. 
\par We capture the interference among the BSs of the operators with the help of a graph $\mathcal{G} = (V,\mathcal{E})$, that is
obtained from knowledge of the topology in the geographical region, where $V$ represents the set of vertices (nodes), and $\mathcal{E}$ represents
the set of edges in the graph. The set of vertices in the graph correspond to the BSs of various operators in the region. Any two base 
stations are said to interfere with each other, if the geographical distance between them is less than a predetermined value $d$. In this case, 
there is an edge between them in the graph. Two interfering BSs (nodes) cannot be assigned same channel concurrently. 

\subsection{Background on Auctions}
\subsubsection{Strategy-Proof Spectrum Auctions}
\label{problem_form}
 
In conventional auctions, once an object is allocated to a buyer, it cannot be allocated to other buyers. However, in spectrum auctions, same 
spectrum (frequency band) can be reused or reallocated after certain fixed distance depending on the coverage area of BSs. This implies
that any two BSs can be assigned the same frequency band if they do not interfere with each other. This feature provides an advantage in 
terms of spectrum utilization, but it is more challenging to achieve strategy-proof spectrum auction. Second price auction mechanism \cite{krishna2009auction} ensures 
strategy-proof behavior in conventional auctions. However, the same is not guaranteed in the spectrum auctions \cite{ebay}. In second price 
auctions, the object goes to the highest bidder and is charged the price of the second highest bidder in the auction.
Moreover, it is not necessary that every base station of an operator interferes with each base station of other operators.
Therefore, achieving strategy-proof spectrum allocation across the multiple BSs of the coexisting operators using second price auction is
not possible. Moreover, it fails to exploit the re-usability of the spectrum which again results in inefficient usage of the spectrum.

VCG mechanism is the first strategy-proof mechanism which always chooses the optimal allocation strategy. VCG mechanism selects the set of 
participants that maximizes the overall sum of valuation in the auction \cite{vickrey,clarke,groves}. But, determining the optimal allocation and 
pricing strategy is burdened
with the high computational complexity of the auctions. Due to high computational cost, VCG mechanism is not suitable for dynamic spectrum 
allocation auctions even in wireless networks of moderate size \cite{roughgarden2016twenty}. In general, VCG mechanism is 
 applicable in combinatorial auctions for sealed bid format, where each player submits a bid for the channel without the knowledge of other 
 players bids in the auction. Unlike second price auctions, VCG is applicable for single parameter environment as well as multi-parameter 
 environment. Next, we describe the VCG mechanism for spectrum allocation.

\subsubsection{Vickrey-Clarke-Groves Mechanism}

\par We assume that there are $n$ BSs to participate in spectrum auction which leads to $2^n$ possibilities. Due to the interference across 
the BSs, all $2^n$ combinations may not be feasible for spectrum allocation. The BSs which are sufficiently far can be allocated channels simultaneously. Let the binary vector $x = \{ x_1, x_2,\ldots,x_n \}$ denote a feasible 
allocation satisfying all the interference constraints, where 
$x_i = 1 $ if a channel is assigned to the BS $i$, otherwise $x_i = 0$. Let $\mathcal{X}$ denote the set of feasible allocations. BS $i$ submits a
bid $b_i$ based on its valuation. Let $b = \{b_1,b_2,\ldots,b_n \}$. The optimal
allocation is given as
\begin{equation}
\label{eqn:eqn1.1}
 x^{\star} = \argmax_{x \in \mathcal{X}} b\cdot x.
\end{equation}

Now, a pricing scheme is defined to make the auction strategy-proof. Using a pricing scheme, the players are enforced to submit
 true valuation of the object to the auctioneer. 
VCG pricing scheme charges the BSs with the welfare loss inflicted due to the presence of BS $i$.
 
Let $\rho_i$ denote the price charged to BS $i$. 
\begin{equation}
\label{eqn:eqn1.2}
 \rho_i = \max_{x \in \mathcal{X}} \sum_{j \neq i} x_j \cdot b_j - \sum_{j \neq i} x_j^\star \cdot b_j,
\end{equation}
where $x^\star$ is the optimal allocation obtained from Equation (\ref{eqn:eqn1.1}). The price charged using Equation (\ref{eqn:eqn1.2}) also
ensures individual rationality i.e., $ 0 \leq \rho_i \leq b_i$. In other words, any BS would never be charged more than its submitted bid.
The individual rationality reflects that the utility gain at a BS can never be negative if a BS bids at its true value.

 Though VCG mechanism achieves the  optimal channel allocation for social welfare maximization, it becomes intractable for large set of 
BSs. Hence, it is not feasible for practical implementation.  
Next, we propose strategy-proof mechanisms to maximize the social welfare of the spectrum for various scenarios. The proposed algorithms are 
also computationally efficient in comparison to VCG. VCG is implemented in two steps: Channel Allocation ($\mathcal{O}(2^n)$) and Price Charging 
scheme ($\mathcal{O}(2^n)$).

\subsection{Notations and Definitions }
 We introduce the following notations:
\begin{itemize}
 \item $\mathcal{N} = \{1,2,\ldots,n\}$ represents the set of operators participating in the spectrum auction in a geographical region.
 \item $m_{i}$ represents the number of base stations corresponding to  operator $i$.
 \item $\mathcal{S}_i = \{S_{i1},S_{i2},\ldots,S_{im_i}\}$ represents the set of base stations of operator $i$.
 \item $v_{ij}$ represents the true valuation of $j^{th}$  base station corresponding to operator $i$ (i.e., $S_{ij}$).
 \item $v_i = [v_{i1},v_{i2},\ldots,v_{im_i}]$ represents the vector of true valuations at base stations of operator $i$.
 \item $b_{ij}$ represents the bid of $S_{ij}$.
 \item $b_i = \{b_{i1},b_{i2},\ldots\,b_{im_i}\}$ represents the vector of bids for operator $i$.
 \item $\mathcal{N}_i$ represents the set of neighboring base stations which are in conflict with the base stations of operator $i$ (same channel
 cannot be allocated simultaneously).
 \item $x_{i}$ represents the binary allocation vector corresponding to operator $i$.  
 \item $x_{ij}$ represents the $j^{th}$ component of $x_i$. $x_{ij} =1$ signifies channel is assigned otherwise not.
 \item $O_i$ represents operators that are neighbors of $i$ i.e.,  ($ \{ \text{operators} ~j ~|~ S_j \bigcap \mathcal{N}_i \neq \phi, j \neq i\})$.
 \item $d_i = \{d_{i1},d_{i2},\ldots,d_{im_i}\}$ represents the number of channels required at base stations of operator $i$.
 \item $N(\mathcal{G}^{'})$ represents the set of active operators from the conflict graph $\mathcal{G}^{'}$ (operators with non-zero demand).

\end{itemize}
\begin{enumerate}
 \item \textit{True valuation} ($\sigma_i^{v}$) : True valuation $\sigma_i^{v}$ of any operator $i$ is defined as the sum of the actual valuations
 (which are private and not known to the auctioneer) of all the BSs corresponding to operator $i$.
 \begin{equation}
 \label{eqn:eqn1}
 \sigma_i^{v} = \sum_{j=1}^{m_i} v_{ij}.
 \end{equation}
 \item \textit{Bidding valuation} ($\sigma_i^{b}$) : Bidding valuation $\sigma_i^{b}$ of operator $i$ is defined as the sum of the bids 
 (which may or may not be same as the actual valuation) of all the BSs corresponding to operator $i$.
  \begin{equation}
  \label{eqn:eqn2}
 \sigma_i^{b} = \sum_{j=1}^{m_i} b_{ij}.
 \end{equation}
 \item \textit{Price} ($p_{i}$): It is defined as the price that an operator $i$ has to pay, in case operator $i$ wins the resources (channels), 
 else it is zero.

 \item \textit{Operator Utility $(\mathcal{U}_i)$} : Utility of an operator $i$ is the difference between the operator valuation (unknown to the 
 auctioneer) and the price charged on the allocation of the channel. If the operator does not get the channel, the utility is zero. In other
 words, it represents the overall gain of an operator $i$ if it is allocated a channel.
 \begin{equation}
 \label{eqn:eqn4}
 \mathcal{U}_i(b_i,b_{-i}) =  \begin{cases} \sigma_i^{v} - p_{i},& ~\text{if the channel is allocated} \\ 0, & ~\text{otherwise}. \end{cases}.
 \end{equation}
\end{enumerate}
 where $b_i$ is the bid vector of operator $i$ and $b_{-i}$ represents bid vectors of other operators except operator $i$.
\begin{definition} \label{def: strategy-proofness}
 An auction is truthful (strategy-proof) if there is no incentive in  deviating  from  the  true  valuation. Thus,  the dominant  strategy  is  
 to  bid  at the true valuation no matter what strategy others choose.
\begin{equation}
 \label{eqn:eqn5}
 \mathcal{U}_{i}(b_i, b_{-i}) \leq \mathcal{U}_{i}(v_i, b_{-i}) \quad \forall b_{i}, \forall b_{-i}.
 \end{equation}
 where $v_i$ is the vector of true valuations at the base stations of operator $i$. 
 \end{definition}

\section{Strategy-proof auction for unit demand}
\label{sec:mec_design}
In this section, we describe our proposed algorithm Single Channel Strategy-proof Auction Mechanism (SC-SPAM) for channel allocation among the 
base stations of multiple operators. Recall the assumptions made 
in previous sections, multiple BSs corresponding to an operator and one channel availability. In auctions, the mechanism design has two 
steps: channel allocation and price charging strategy. In channel allocation phase the auctioneer decides to whom the right to use the channel is 
provided. What price should be charged is decided in pricing strategy phase. The price charged enforces the operators to declare the true 
valuations in order to ensure a strategy-proof auction. Now, we define critical operator which is used later in the price charging strategy.
\begin{definition} \label{def: critical neighbor}
A critical operator $C(i)$ of an operator $i$ is defined as the operator in $\mathcal{N}_i\setminus \{i\}$ whose sum of the bids of base stations is maximum among all the 
operators in $\mathcal{N}_i$ except $i$. The critical operator $C(i)$ is given as any $j \in O_i$ such that
\begin{equation}
 \label{eqn:eqn6}
 \begin{split}
  \sum_{k \in \{ \mathcal{N}_i\bigcap S_j\}} b_{jk} \geq \sum_{k \in \{ \mathcal{N}_i\bigcap S_j'\}}b_{j'k},
 \quad \forall j' \neq j,~i~  \text{and}~ j' \in O_i.
 \end{split}
 \end{equation}
 \end{definition}
 Let us define a set $\mathcal{L}_{j}^{i}= \mathcal{N}_i \cap S_j$, which contains the BSs of operator $j$ in conflict with the BSs 
  of operator $i$. Let $\Lambda_{j}^{i}$ be the valuation of set $\mathcal{L}_{j}^{i}$ which is given as,
 $\Lambda_{j}^{i} = \sum b_{jk}\mathbbm{1}_{\{S_{jk} \in \mathcal{L}_{j}^{i} \}}$. The critical operator of an operator $i$ can be obtained as
 $C(i) = \argmax \limits_{j \neq i}\Lambda_j^{i}, ~j \in O_i$ and the critical operator valuation 
 $\sigma_i^{c}$ is given as, $\sigma_i^{c} = \max \limits_{j\neq i} \Lambda_j^{i}, j \in O_i$.
 
 
 The strategy-proof algorithm proposed is described in Algorithm \ref{channel_allocation}. 
\begin{algorithm}
\caption{Single Channel Strategy-proof Auction Mechanism}
\label{channel_allocation}
\begin{algorithmic}[1]
\State \textbf{Input: }Conflict Graph $\mathcal{G}$, bid vector, $ \{ b_i \}_{\{ i \in \mathcal{N} \}}$.
 \State\textbf{Output: }Binary channel allocation vector $\{ X_i\}_{\{ i \in \mathcal{N} \}}$, price $\{p_i\}_{\{ i \in \mathcal{N}  \}}$.
 \State Initialize  $x_i \leftarrow 0$, $N(\mathcal{G})= \{ 1,2,\ldots, n\}$
\State Initialize $p_i \leftarrow 0$, $\mathcal{G^{'}} \leftarrow \mathcal{G}$, $N(\mathcal{G}^{'}) \leftarrow N(\mathcal{G})$, $FLAG \leftarrow True$.
 \While  {$(FLAG = True)$}
   \State Make $i^* \leftarrow \argmax \limits_{i \in N(\mathcal{G}^{'})}\sigma_i^{b}$.
   \State Find $\mathcal{N}_{i^*}$.
   \State Set $C(i^*) \leftarrow \argmax \limits_{j \neq i^*}\Lambda_j^{i^*},~j \in O_i*$ and $\sigma_{i^{*}}^{c} \leftarrow \max \limits_{j\neq i^*} \Lambda_j^{i^*},~j \in O_i*$. \label{pc1:cn}     
   \State Make $p_{{i}^*} \leftarrow \sigma_{i^{*}}^{c}$ and  $x_{{i}^*} \leftarrow 1$.
     \If {$ (\mathcal{G}^{'} \cap (S_{i^*} \cup \mathcal{N}_{i^*}) = \mathcal{G}^{'})$}
       \State  $ FLAG \leftarrow False. $
     \Else 
       \State $\mathcal{G}^{'} \leftarrow \mathcal{G}^{'} \backslash \{ S_{i^*} \cup \mathcal{N}_{i^*} \}$ \label{pc1:null}.
     \EndIf 
\EndWhile
\end{algorithmic}
\end{algorithm}
This algorithm takes conflict graph $\mathcal{G}$ and bid vector corresponding to each operator $ \{ b_i \}_{\{ i \in \mathcal{N} \}}$ as 
input. Binary channel
allocation vector $\{ x_i\}_{\{ i \in \mathcal{N} \}}$ and payment vector $\{p_i\}_{\{ i \in \mathcal{N} \}}$ for all the operators are initialized to zero.
Initially, we determine the maximum bidding operator and its critical neighbor $C(i^*) = \argmax \limits_{j \neq i}\Lambda_j^{i^*}, ~j \in O_i*$ (line\ref{pc1:cn}). 
Channel allocation vector, $x_i$ for the maximum bidding operator (winner) is updated to $1$ and the payment for the winning operator is 
updated to the price of the critical neighbor valuation, $\sigma_{i^*}^{c}$. The conflict graph $\mathcal{G}^{'}$ is updated with the remaining 
nodes after the removal of the nodes corresponding to the winning operator $i^*$ and its neighboring nodes $\mathcal{N}_{i^{*}}$. Repeat the
process until $\mathcal{G}^{'}$ is NULL (line \ref{pc1:null}), i.e., no other BSs is present in $\mathcal{G}^{'}$. Next, we explain Algorithm \ref{channel_allocation}
through an example.

\par \textbf{Example:} Consider a network of $3$ operators $A, B, C$, where each operator has $3$ BSs deployed in the region to provide 
services to the subscribers. BSs $\{A_1, A_2, A_3\}$, $\{B_1, B_2, B_3\}$ and $\{C_1, C_2, C_3\}$ correspond to operators $A$,
$B$ and $C$, respectively. The conflict graph is illustrated in Fig. \ref{algo_1} based on the interference criteria.
 
 \begin{figure}
        \centering
        \begin{subfigure}[b]{\linewidth}
        \centering
                \includegraphics[width=0.7\textwidth]{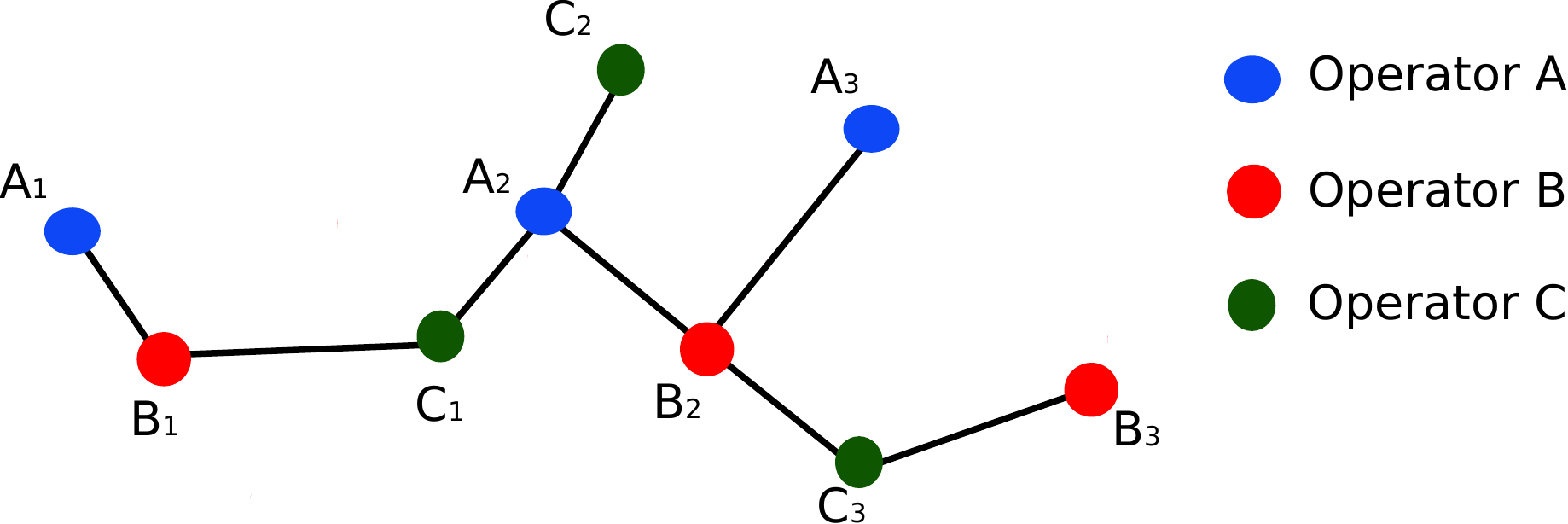}              
                \caption{}
                 \label{algo_1}
        \end{subfigure}\\
        
      \begin{subfigure}[b]{\linewidth}
      \centering
                \includegraphics[width=0.6\textwidth]{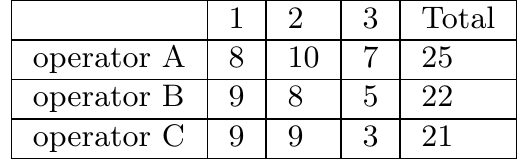}
                \caption{}
              \label{t:bid_vector}
        \end{subfigure}
        \caption{ Network of $3$ operators (a) Conflict Graph (b) Bid vector table corresponding to operator A, B and C.}
\end{figure}
 \par In Fig. \ref{t:bid_vector}, the bid vector of each operator is shown. In the first iteration, Operator $A$ has the highest bid among the operators
 with a value of 
 $\sigma_A^b = 25$. Therefore, Operator $A$ is allocated channel across BSs, and it has to pay the price of its critical operator. As per
 Definition \ref{def: critical neighbor}, critical operator for winning operator $A$ is operator $C$ and $p_A = \sigma_A^c = 18$. Thus, the
 utility of operator $A = \mathcal{U}_A = 7$. We update the conflict graph with the BSs of operators $B$ and $C$ not in conflict with the
 BSs of operator $A$.
 In second iteration, the updated $\mathcal{G}$ comprises BSs $B_3$ and $C_3$. Operator $B$ wins the channel and pays the price, $ \sigma_B^c = 3$.
 The utility of operator $B$ is $2$. Operator $C$ is not allocated channel. 
 
 Now, if operator $B$  tries to increase its utility by deviating from its true valuation $\sigma_B^v = 22$ to $\sigma_B^b = 28$ by increasing the
 bid of its BSs, operator $B$ will get channel being the highest bidder among the operators. But, it has to pay the price of its critical
 operator which is operator $A$ and therefore, pays $ \sigma_B^c = 25$. This leads to a negative utility $-3$ for operator $B$. Thus, bidding at the true 
 valuation is the best strategy for an operator in the auction. 
 Next, we prove that the proposed algorithm follows monotonicity and strategy-proofness.

 \begin{lemma}
 If  operator $i$ is allocated a channel by bidding at $\sigma_i^{b}$, it will also be allocated if it bids $\sigma_i^{b'}$, where $\sigma_i^{b'} 
 \geq \sigma_i^{b}$ provided all the other operators' bids remain unchanged. \label{lemma:monotone1}
\end{lemma}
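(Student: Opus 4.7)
The plan is to walk through the execution of SC-SPAM iteration by iteration and argue by induction that raising $\sigma_i^b$ to $\sigma_i^{b'}$ can only let operator $i$ be declared a winner sooner or at the same step. Under the original bid vector let $t$ be the iteration at which $i$ is selected, let $k_1,\ldots,k_{t-1}$ be the operators picked in the preceding iterations, and let $\mathcal{G}^{(1)},\ldots,\mathcal{G}^{(t)}$ denote the residual conflict graph at the start of each iteration, so $\mathcal{G}^{(1)}=\mathcal{G}$ and $\mathcal{G}^{(s+1)}=\mathcal{G}^{(s)}\setminus\{S_{k_s}\cup\mathcal{N}_{k_s}\}$. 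Because $i$ is picked at iteration $t$, we have $i\in N(\mathcal{G}^{(s)})$ for every $s\le t$ and $\sigma_i^b\ge\sigma_j^b$ for every $j\in N(\mathcal{G}^{(t)})$.

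The induction hypothesis I would carry is the following: for each $s\le t$, after running SC-SPAM with operator $i$ bidding $\sigma_i^{b'}$ through the first $s-1$ iterations, either $i$ has already been declared a winner or the residual graph is exactly $\mathcal{G}^{(s)}$. The base case $s=1$ is trivial. For the inductive step, at iteration $s$ the algorithm picks the operator in $N(\mathcal{G}^{(s)})$ whose total bid is largest; only $i$'s bid has changed (upward), while every other operator $j$ still has total bid $\sigma_j^b$. Either $\sigma_i^{b'}$ is now the maximum, in which case $i$ wins at iteration $s$ and we are done, or the maximum is attained by some $j\ne i$, and that $j$ must be $k_s$, since $k_s$ was the maximizer originally and no bid other than $i$'s has moved. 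In the latter case the same base stations $S_{k_s}\cup\mathcal{N}_{k_s}$ are removed, so the residual graph matches $\mathcal{G}^{(s+1)}$ exactly, completing the induction.

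If the induction reaches iteration $t$ without $i$ having won, the residual graph equals $\mathcal{G}^{(t)}$, every operator other than $i$ still has its original total bid, and $\sigma_i^{b'}\ge\sigma_i^b\ge\sigma_j^b$ for every $j\in N(\mathcal{G}^{(t)})$. Hence $i$ is still an argmax at iteration $t$ and is selected, so $x_i=1$ in the perturbed run, proving monotonicity.

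I expect the main technical nuisance to be the treatment of ties in the argmax. Two operators may share the same total bid, and the algorithm's output then depends on the tie-breaking rule. I plan to handle this by assuming a deterministic tie-breaking rule (for instance by operator index) used identically in both executions; any such rule has the property that raising only $i$'s bid can move $i$ no lower in the ordering, which is exactly what the argument above needs.
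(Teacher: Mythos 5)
Your proof is correct, and it is in fact more careful than the paper's own argument. The paper proves Lemma~\ref{lemma:monotone1} with a static sorted-list observation: arrange the operators in non-increasing order of $\sigma_j^{b}$, note that raising operator $i$'s bid moves it from position $k$ to some position $l\le k$, and conclude it still wins. That phrasing treats SC-SPAM as if it simply walked down a fixed ranking, whereas the algorithm is a greedy loop whose candidate set at each iteration depends on which base stations were deleted by earlier winners; the paper never checks that the deletion sequence is unaffected by the perturbation. Your induction supplies exactly that missing piece: by showing that until $i$ is selected the same operator $k_s$ wins iteration $s$ (since only $i$'s bid moved, and it moved upward) and hence the same residual graph $\mathcal{G}^{(s+1)}$ results, you guarantee that the comparison at iteration $t$ is between the same competitors with the same bids, after which $\sigma_i^{b'}\ge\sigma_i^{b}\ge\sigma_j^{b}$ closes the argument. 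Your explicit treatment of ties via a fixed deterministic tie-breaking rule is also something the paper silently omits. One caveat, shared with the paper: the lemma perturbs only the aggregate $\sigma_i^{b}$, but once some of $i$'s base stations have been deleted by earlier winners the algorithm effectively compares residual sums, so strictly one should assume the perturbation weakly increases each component $b_{ij}$ (or that full aggregates are always compared); under either reading your induction goes through verbatim.
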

\begin{proof}
 As stated in Algorithm \ref{channel_allocation}, all the operator bids are arranged in non-increasing order of the bids
 $\sigma_i^{b}, \forall i \in \mathcal{N}$. Let us assume in the sorted list ($S$, say) operator $i$ lies at position $k$. Now, keeping all the other operator
 bids unchanged, increase the bid of operator $i$ to $\sigma_i^{b'}$, and again arrange all the operator bids in non-increasing order in another
 sorted list $S'$. Let us say, the position of operator $i$ in $S'$ is $l$, where $l \leq k$. Thus, the operator moves higher in the position 
 which ensures that it still gets the channel. This completes the proof.
\end{proof}

\begin{lemma}
 Algorithm \ref{channel_allocation} is individually rational. \label{lemma:rational1}
\end{lemma}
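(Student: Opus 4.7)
The plan is to show that whenever an operator $i$ is allocated channels by Algorithm \ref{channel_allocation}, the payment $p_i$ charged never exceeds its bidding valuation $\sigma_i^b$. Combined with truthful bidding ($\sigma_i^b = \sigma_i^v$), this gives $\mathcal{U}_i = \sigma_i^v - p_i \geq 0$; for operators that are never selected as $i^*$, the utility is zero by definition of $\mathcal{U}_i$, so the inequality holds trivially.

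First I would fix an iteration in which operator $i = i^*$ wins. By the selection rule (line 6 of the algorithm), $i^*$ is the operator with the largest bidding valuation among all currently active operators in $N(\mathcal{G}')$, so $\sigma_{i^*}^b \geq \sigma_j^b$ for every $j \in N(\mathcal{G}')$. The critical operator $C(i^*)$ is chosen from $O_{i^*}$, which consists of operators possessing at least one base station in $\mathcal{G}'$ that conflicts with some BS of $i^*$; hence $C(i^*) \in N(\mathcal{G}')$ and the maximality inequality applies to it in particular, giving $\sigma_{C(i^*)}^b \leq \sigma_{i^*}^b$.

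Next I would compare the charged price to the critical operator's total bid. By the pricing rule, $p_{i^*} = \sigma_{i^*}^c = \Lambda_{C(i^*)}^{i^*} = \sum_{k \in \mathcal{N}_{i^*} \cap S_{C(i^*)}} b_{C(i^*),k}$, which is a sum restricted only to those BSs of $C(i^*)$ that conflict with $i^*$'s BSs. Since all bids are non-negative, this partial sum is bounded above by the full sum $\sigma_{C(i^*)}^b = \sum_{k=1}^{m_{C(i^*)}} b_{C(i^*),k}$. Chaining gives $p_{i^*} \leq \sigma_{C(i^*)}^b \leq \sigma_{i^*}^b$, and under truthful bidding $p_{i^*} \leq \sigma_{i^*}^v$, yielding $\mathcal{U}_{i^*} \geq 0$.

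I do not expect significant obstacles here; the argument is essentially a two-step chain of inequalities. The one subtlety worth stating explicitly is the membership $C(i^*) \in N(\mathcal{G}')$, which allows us to apply the maximality of $\sigma_{i^*}^b$ to the critical operator. The edge case $O_{i^*} = \emptyset$ (operator $i^*$ has no active neighbors) should also be handled separately: in that situation $\sigma_{i^*}^c = 0$, so $p_{i^*} = 0 \leq \sigma_{i^*}^v$ and individual rationality holds vacuously.
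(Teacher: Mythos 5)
Your proof is correct and follows essentially the same route as the paper's: bound the price $p_{i^*}=\Lambda_{C(i^*)}^{i^*}$ by the critical operator's full bidding valuation (a partial sum of non-negative bids), then by the winner's bidding valuation via the maximality of $\sigma_{i^*}^b$ in the current active set. Your version is slightly more careful than the paper's in making the partial-sum step and the empty-neighborhood edge case explicit, but the argument is the same.
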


\begin{proof}

As stated in the pricing scheme of Algorithm \ref{channel_allocation}, winning operator $i$ is charged price $p_i = \sigma_i^c$. Moreover, we know
winning operator $i$ valuation is the highest among all operators. 
  \begin{align}
  \label{eqn:ir1:1}
  \therefore~~~ \sigma_i^{b} > \sigma_j^{b}, ~~\forall ~j \neq i.
  \end{align}
Using Definition \ref{def: critical neighbor}, $\sigma_i^c = \max \limits_{j\neq i, ~j \in O_i} \Lambda_j^{i} $. This implies that
\begin{align}
\label{eqn:ir1:2}
 \sigma_i^c \leq \max \limits_{j \neq i} \sigma_j^{b}.
\end{align}
From Equations [\ref{eqn:ir1:1}] and [\ref{eqn:ir1:2}], we get $\sigma_i^{c} < \sigma_i^{b}$. Hence, $p_i \leq \sigma_i^{b}$. 
This  proves individual rationality of the algorithm.
\end{proof}

\vspace{-0.2cm}
\begin{theorem}Algorithm \ref{channel_allocation} is strategy-proof\label{thm:thm1}.
\end{theorem}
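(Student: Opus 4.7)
The plan is to follow the classical Myerson-style argument for strategy-proofness: combine monotonicity of the allocation with the fact that the winner's price is the ``critical bid'' needed to win. Monotonicity is supplied by Lemma~\ref{lemma:monotone1}, and the decisive initial observation I would make is that the price $\sigma_i^c$ depends only on $b_{-i}$ and the conflict graph, not on operator $i$'s own bid vector; changing $b_i$ can change whether $i$ wins, but it cannot change what $i$ would pay upon winning.

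Fixing $b_{-i}$, I would split into cases according to whether $i$ wins under the truthful report $v_i$ and under an alternative report $b_i$. If $i$ wins under $v_i$, Lemma~\ref{lemma:rational1} gives truthful utility $\sigma_i^v - \sigma_i^c \geq 0$; if $i$ also wins under $b_i$ the price is unchanged, so the utility is identical, and if $i$ loses under $b_i$ the utility drops to $0$, which is no better. If $i$ loses under both bids the utility is $0$ either way. The only nontrivial sub-case is that $i$ loses under $v_i$ but wins under $b_i$; here the deviation utility equals $\sigma_i^v - \sigma_i^c$, and I must show this is non-positive. By the contrapositive of Lemma~\ref{lemma:monotone1}, losing at $v_i$ forces $\sigma_i^v < \sigma_i^b$, so it remains to argue that the threshold value of $\sigma_i^b$ at which $i$ switches from losing to winning is exactly $\sigma_i^c$. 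I would establish this by tracing the iterations of Algorithm~\ref{channel_allocation}: at the iteration in which $i$ is chosen as the argmax, the operator realizing $\sigma_i^c$ must still be active, and would itself take over the argmax (and, via the removal step, eliminate $i$'s base stations) were $i$'s total bid reduced to $\sigma_i^c$. Hence any $\sigma_i^b \le \sigma_i^c$ loses, which forces $\sigma_i^v \le \sigma_i^c$ and gives non-positive deviation utility.

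The hard part will be making this threshold-equals-price step precise, because $\sigma_i^c = \max_{j \in O_i} \Lambda_j^i$ aggregates only the bids of $j$'s base stations lying in $\mathcal{N}_i$, whereas the algorithm's argmax step compares full totals $\sigma_j^b$. Proving that the true winning threshold coincides with $\sigma_i^c$ rather than with some larger quantity involving the $\sigma_j^b$'s is what separates the multi-base-station operator setting from the single-BS single-parameter analogue, and I expect it will require exploiting both the unit-demand assumption of this section and the removal rule $\mathcal{G}' \leftarrow \mathcal{G}' \setminus (S_{i^*} \cup \mathcal{N}_{i^*})$, which couples the neighborhood structure to which operators remain competitive at the iteration when $i$ is selected.
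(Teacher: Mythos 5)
Your case decomposition coincides with the paper's: Appendix~A runs exactly the two scenarios (overbid, underbid), each split into the three win/lose sub-cases, and the decisive step in both your plan and the paper's proof is the sub-case where $i$ loses at $v_i$ but wins at $b_i$, which needs $\sigma_i^v \le \sigma_i^c$. The paper simply asserts this inequality (``$\because \sigma_i^v < \sigma_i^c$'') with no argument, so you have correctly located the crux rather than missed it. However, two of your supporting claims are genuinely wrong for this particular algorithm. First, the price is \emph{not} a function of $b_{-i}$ alone: Algorithm~\ref{channel_allocation} is iterative, the round in which $i$ is selected depends on $\sigma_i^b$, and $C(i)$ and $\sigma_i^c$ are evaluated on the residual graph $\mathcal{G}'$ at that round. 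The paper's own worked example shows this --- operator $B$ wins under both the truthful bid ($22$) and the inflated bid ($28$), yet pays $3$ in the first run and $25$ in the second. This breaks your sub-case ``wins under both reports, hence identical price, hence identical utility''; what one actually has to argue (and what the paper also glosses over) is that winning \emph{earlier} can only increase the charged $\sigma_i^c$, since more neighbors are still present in $\mathcal{G}'$.

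Second, your route to the threshold step does not go through. The $\argmax$ in the algorithm compares full operator totals $\sigma_j^b$, whereas $\sigma_i^c = \max_{j \in O_i} \Lambda_j^i$ counts only the bids of $j$'s base stations lying in $\mathcal{N}_i$. If some conflicting operator $j$ has base stations outside $\mathcal{N}_i$, then $\sigma_j^b > \Lambda_j^i$ and $i$ loses the first-round $\argmax$ at every bid below $\sigma_j^b$, a threshold strictly above $\sigma_i^c$; so ``any $\sigma_i^b \le \sigma_i^c$ loses'' is false and the winning threshold is not the price. Concretely, let $i$ have one BS with true value $10$ and let $j$ have two BSs bidding $6$ each, only one of which conflicts with $i$'s BS: truthfully $j$ wins first ($12 > 10$), $i$'s BS is removed as part of $\mathcal{N}_j$, and $\mathcal{U}_i = 0$; if $i$ inflates to $13$ it wins first and is charged $\sigma_i^c = \Lambda_j^i = 6 < 10$, a strict gain. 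Thus the inequality $\sigma_i^v \le \sigma_i^c$ that both you and the paper rely on in the lose-truthfully/win-by-overbidding case does not follow from the stated definitions, and the Myerson-style ``price equals critical winning bid'' identity you hope to invoke is precisely what fails in this multi-base-station setting. Your instinct that this is where the difficulty lies is right; the sketch you give does not close it, and neither does the paper's Appendix~A.
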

\begin{proof}
Refer Appendix $A$.
\end{proof}

\section{Extension for non uniform demand of channels among the base stations of operators }
\label{sec:non-uniform_nsp}
In this section, we  extend SC-SPAM for the case where the demand 
of channels  across the BSs of an operator is not uniform (or same). Instead the BSs of an operator may have different channel 
requirements depending on the traffic conditions. Let us define the demand of operator $i$ as $d_i = \{d_{i1},\ldots,d_{im_i}\}$,
where $d_{ij}$ 
represents the channel demand at $j^{\rm th}$ BS associated with operator $i$.
It is assumed that the operators do not have strict demand, i.e., they are willing to accept any number of 
channels between $0$ to $d_{ij}$ at BS $j$. 

\par Let us define, $$\sigma_i^b(l) = \sum \limits_{j=1}^{m_i}b_{ij}\mathbbm{1}_{\{d_{ij} > 0 \}}, ~l= \{1,\ldots K \}$$ for every operator $i$
where $b_{ij}$ is per channel bid value corresponding to $j^{th}$ BS of operator $i$. $\sigma_i^b(l)$ computes the valuation of each 
operator corresponding to demand of channel at its BSs for a channel. As stated above, at least one channel is required at all the BSs
participating in auction of any operator, therefore, $\sigma_i^b(1)= \sigma_i^b$ (Equation (\ref{eqn:eqn2})).

\begin{algorithm}[t]
\caption{Non-uniform Demand Auction Mechanism (NUD-AM) }
\label{variable_demand}
\begin{algorithmic}
\State \textit {\bf Input: }Conflict Graph $\mathcal{G}$, $K$ channels, bid vector, $ \{ b_{i} \}_{\{ i \in \mathcal{N} \}}$, demand vector, $ \{ d_{i} \}_{\{ i \in \mathcal{N} \}}$.
\end{algorithmic}
\begin{algorithmic}
\State \textit {\bf Output: }Allocation vector $\{ x_{i}\}_{\{ i \in \mathcal{N} \}}$, price $\{p_{i}\}_{\{ i \in \mathcal{N} \}}$.
\end{algorithmic}
\begin{algorithmic}[1]
\State Initialize demand vector $d_i' \leftarrow d_{i},$ for every $i$, $\ell= K$.
\While {$(\ell > 0)$}
 \State Compute $\sigma_i^{b}(\ell)= \sum\limits_{j=1}^{m_i}b_{ij}\mathbbm{1}_{\{d_{ij} > 0\}}$. \label{ps3:val_compute}
 \State Allocate channel and compute price (Algorithm \ref{channel_allocation}). \label{ps3:allocate}
 \State $d_i' \leftarrow d_i' - x_i$ for every $i$ \label{ps3:dmd_updt}
\Procedure{Conflict\textendash Graph\textendash Updation}{}
\State If $(d_{ij} = 0)$ update $\mathcal{G}^{'} \leftarrow \mathcal{G}^{'}\setminus \{S_{ij}\}$ 
\State Else $\mathcal{G}^{'} \leftarrow \mathcal{G}^{'}$ 
\EndProcedure
\State $\ell \leftarrow \ell-1$.
\EndWhile
\end{algorithmic}
\end{algorithm}

We propose Non-uniform Demand Auction Mechanism (NUD-AM) in Algorithm \ref{variable_demand} which takes the demand vector 
$\{ d_{i} \}_{ i \in \mathcal{N}}$ as input along with the number of channels for auction.
Here, it is assumed that the valuation of the channel increases linearly with the demand at any BS. The bid vector, 
$b_i$ reflects per channel bid for BSs of an operator. In case, the demand of the channel at any BS is $r$, then valuation at the particular 
BS gets multiplied by the demand, i.e., $r\cdot v_{ij}$. 
Channel allocation and price computation are performed iteratively for each channel present in database. For each channel allocation, we compute
$\sigma_i^{b}(l)$, which determines the operator valuation as per the demand at its BSs (line \ref{ps3:val_compute}).
Based on the operator valuation, we determine channel allocation  and price charged from the operators using SC-SPAM.
Then, demand across BSs is updated based on the allocation vector for every operator (line \ref{ps3:dmd_updt}). Next, we 
update the conflict graph before the  next channel allocation.
Channels are allocated corresponding to $\sigma_i^{b}(l)$, to ensure the maximization of the social welfare.
The process continues until all the channels are allocated. Next, we describe the operations of NUD-AM with an example. 

\subsection{Example}
\label{eg:cg2}
We consider a wireless network of $3$ operators $A$, $B$ and $C$. Each operator has multiple BSs to provide services to the end users in
a geographical region.
As illustrated in Fig.\ref{eg2}, operator $A$, $B$ and $C$ have BSs $\{A_1,A_2,A_3,A_4\}$, $\{B_1,B_2,B_3,B_4\}$ and $\{C_1, C_2\}$, 
respectively. We consider that the channel demand across the BSs of an operator is not the same, and the valuation at any BS
increases linearly with the demand.  We consider $2$ channels are available for auction. 
An operator can bid for at most the number of channels available for auction at any of its BSs. Each operator submits 
bid vector. As stated above bids are linearly increasing with demand, the bid vector contains bid per channel at each BS.

\begin{figure}[h]
 \centering
 \includegraphics[width = 0.45\linewidth]{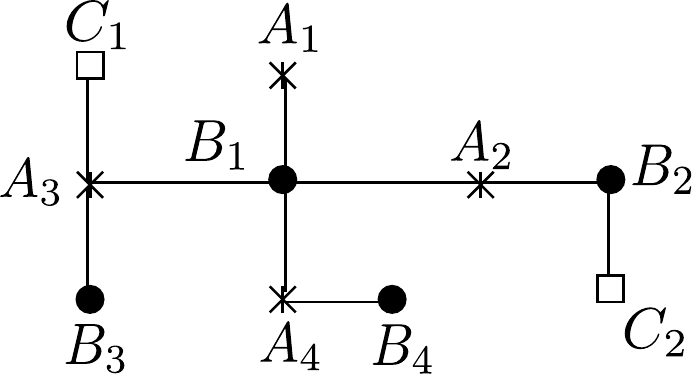}
 \caption{Conflict graph of the 3 operators. \label{eg2}}
\end{figure}

We consider the demand vectors for the operators $A$, $B$ and $C$ are given as $d_{A} = [2~ 1~ 2~ 2]$, $d_{B} = [2~1~1~2] $ and 
$d_{C} = [2~1]$, respectively. The bids at the BSs of operators $A$, $B$ and $C$ are represented as $b_A = [8~10~7~6]$,
$b_{B}=[8 ~9 ~9 ~10]$ and $b_{C}=[10 ~9]$, respectively. Channel allocation procedure is performed in two iterations.

\noindent \textbf{Case 1 : All operators bid at true value across BSs}.\\
\noindent $\bullet$ \underline{Iteration 1}: First we determine the $\sigma_{i}^{b}(1)$, $\forall~ i = \{A,B,C\}$. $\sigma_A^{b}(1)= 31$, 
$\sigma_B^{b}(1)= 36$ and $\sigma_C^{b}(1)= 19$.
Similar to the calculations shown in Section \ref{channel_allocation}, Operators $B$ and $C$ get channel at BSs
$\{B_1,B_2,B_3,B_4\}$ and $\{C_1 \}$. Now, we obtain the price charged from the winners of the auction using critical operator 
(Definition \ref{def: critical neighbor}). The price charged from operator $p_B = 31$ and $p_C = 0$. Next, we update the conflict graph 
for second channel allocation with non-zero demand across BSs as illustrated in Fig. \ref{eg1_seq1}.

 \begin{figure}[h]
 \centering
 \includegraphics[width = 0.37\linewidth]{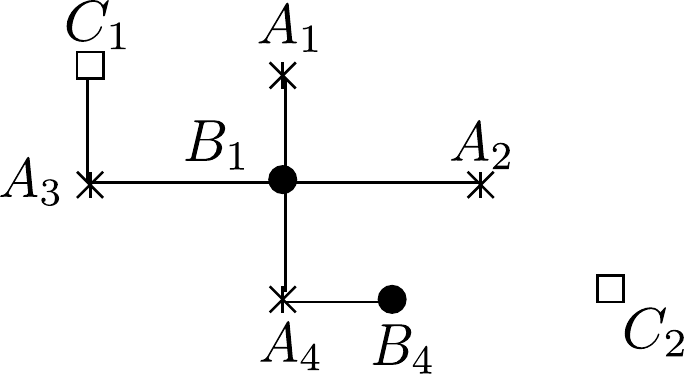}
 \caption{Updated conflict graph after the first iteration.\label{eg1_seq1}}
\end{figure}

\noindent $\bullet$ \underline{Iteration 2}: Again we perform same procedure as describe in Iteration 1 on updated $\sigma_A^{b}(2)= 31$, 
$\sigma_B^{b}(2)= 18$ and $\sigma_A^{C}(2)= 19$. Now, BSs $\{A_1, A_2, A_3, A_4\}$ and $\{C_2\}$ get channel corresponding to operators
$A$ and $C$. The price charged from the operators which are allocated channels is $p_{A}= 18$ and $p_{C}= 0$.\\

\noindent \textbf{Case 2 : Except operator $B$ all operators bid at true value}.\\
 Let Operator $B$ deviates from the true valuation and submits  $b_B= (8,~6,~6,~9)$ to the auctioneer. \\
\noindent $\bullet$ \underline{Iteration 1}: As Operator $B$ deviates from the true value, $\sigma_{B}^{b}(1)$ reduces to $29$. Channels are 
allocated at $\{A_1,A_2,A_3,A_4\}$ and $\{C_2\}$ BSs of operators $A$ and $C$, respectively. The price charged are $p_{A} = 29$ and 
$p_C = 0$. Next, update the conflict graph. \\
 \begin{figure}[h!]
 \centering
 \includegraphics[width = 0.37\linewidth]{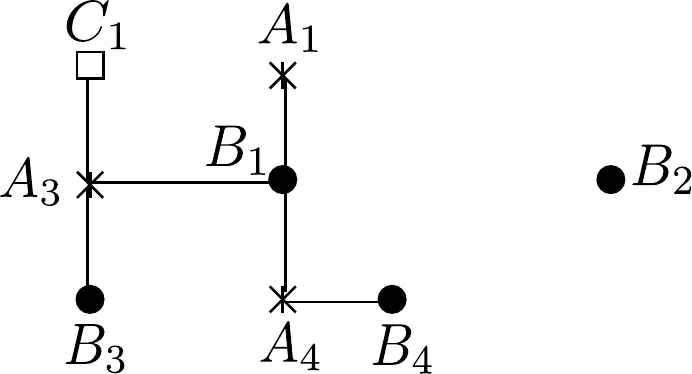}
 \caption{Updated conflict graph after the first iteration. \label{eg1_seq2}}
\end{figure}
\noindent $\bullet$ \underline{Iteration 2}: Channels are allocated on the updated graph shown in Fig. \ref{eg1_seq2} at $\{B_1,B_2,B_3,B_4\}$
and  $\{C_1\}$ BSs of operators $B$ and $C$, respectively. We observe that the demand at the BS $A_2$ is zero, so it is no 
longer the part of the conflict graph. Therefore, the price charged from the operator $B$ and $C$ are $21$ and $0$, respectively.

It is observed that operator $B$ gets the same number of channels in both the cases (true valuation and misreporting to lower value).
However, the price charged at the true value and the deviated bid value are $31$ and $21$, respectively for operator $B$. This clearly shows the 
utility gain of operator B is $10$. Hence, NUD-AM is not always strategy-proof.
 
As channel allocation procedure for a channel in NUD-AM is same as SC-SPAM, therefore, NUD-AM is strategy-proof individually for 
every iteration. But it may not be strategy-proof as a whole.
The reason behind NUD-AM not being strategy-proof is the updation of the conflict graph after each allocation. This results 
in removal of BSs where demand is satisfied. This shows that addressing non-uniform demand is challenging. Next, we present a new algorithm for 
this purpose.


\section{Weakly Strategy-proof Algorithm for Non-uniform Demand}
\label{sec:non-uniform}
Algorithm NUD-AM proposed in Section \ref{sec:non-uniform_nsp} considers non-uniform demand across the base stations of an operator where
the channel valuations increase linearly with the demand at the base stations. As NUD-AM charges price sequentially from the BSs in each step,
it fails to be 
strategy-proof in certain cases, e.g. if an operator chooses to bid lower than its true valuation.
In this section, we propose Non-uniform Demand Weakly Strategy-proof Auction Mechanism (NUD-WSPAM) which ensures that the operators have no 
incentive to deviate from the true valuation even if the demand
across BSs is non-uniform and the bids at a BS is not linearly increasing function of the demand. 
 
Moreover, in comparison to the NUD-AM, we consider that the channel valuations  are not a linear function of the demand. 
Hence, the operators are required to report the bid valuation 
corresponding to multiple channel demand at each BS to the auctioneer. We consider that the demand at any BS across the network 
cannot be greater than the total number of channels available in the spectrum database. 
Now, each operator reports a bid vector for each BS associated with it. 
Let $\mathcal{B}_i$ denote the bid for operator $i$. Here, $\mathcal{B}_i(\ell,j)$ is bid for demand $\ell$ at BS $j$ of operator $i$ 
if $(\ell-1)$ channels are already assigned.
We enforce that the bids submitted by operators are non-increasing i.e.,
$$\mathcal{B}_i(\ell,j) \geq \mathcal{B}_i(\ell+1,j),~ \text{for all}~ i,j,\ell.$$
This is motivated by the observation that marginal true value per channel is also non-increasing with the demand \cite{chandra2008case}.

\begin{algorithm}[t]
\caption{Non-uniform Demand Weakly Strategy-proof Auction Mechanism (NUD-WSPAM)}
\label{algo:sp_variable_bid}
\begin{algorithmic}
\State \textit{\bf Input: }Conflict Graph $\mathcal{G}$, $K$ channels, non-increasing bid vector, ${\mathcal{B}_{i}}_{\{ i \in N \}}$, demand vector $ \{ d_{i} \}_{\{ i \in N \}}$.
\end{algorithmic}
\begin{algorithmic}
\State \textit{\bf Output: }Channel allocation vector $\{ x_{i}\}_{\{ i \in \mathcal{N} \}}$, price $\{p_{i}\}_{\{ i \in \mathcal{N} \}}$
\end{algorithmic}
\begin{algorithmic}[1]
\State Initialize final allocation vector $x_i^{f} \leftarrow 0$, $\mathcal{G^{'}} \leftarrow \mathcal{G}$
\State Initialize $p_{i} \leftarrow 0$, $b_i= \mathcal{B}_i(1,:) \quad \forall i \in N$
\While {$(K > 0)$}
  \State Find $x_{1},\ldots,x_{n}$ using Algorithm \ref{channel_allocation} \label{ps4:allocation}
  \State Update $x_i^{f}  \leftarrow x_i^{f} +  x_i$, $\forall ~i$
  \State Update $d_i \leftarrow d_i  -  x_i^{f}$, $\forall ~i$ 
  \Procedure{Bid \textendash Updation}{}
  \State for $i = 1: n$
  \State for $j = 1: m_i$ 
  \State $b_{ij} = \mathcal{B}(x_{ij}^{f}+1, j)$ end end
 \EndProcedure
 \Procedure{Conflict\textendash Graph\textendash Updation}{}
 \State see Algorithm \ref{variable_demand}
  \EndProcedure
  \State $K \leftarrow K-1$
\EndWhile
  \State Charge price as per the Equation (\ref{eqn:cr_p2}). \label{ps4:price}
\end{algorithmic}
\end{algorithm}

In Algorithm \ref{algo:sp_variable_bid}, we present a generalized algorithm NUD-WSPAM. Unlike previous mechanisms, NUD-WSPAM first determines 
allocation for all the channels present in the spectrum database and then computes the price to be charged. At each iteration, a channel is allocated using Algorithm
\ref{channel_allocation} for every channel available in spectrum database as mentioned in line \ref{ps4:allocation}.
To compute the price, we update the conflict graph which comprises of the BSs where channel requirement is not satisfied after the 
allocation is complete. The price is charged  based on the critical operator in the final updated graph (line \ref{ps4:price}).

As described in Algorithm \ref{algo:sp_variable_bid}, allocation is performed iteratively for each channel and then the bids are
updated after each allocation for all the operators. The bid of operator $i$ is  $\mathcal{B}_i$, 
where BS $j$ has multiple bids given as $\{\mathcal{B}_i(\ell,j)| 0 < \ell \leq d_{ij}\}$.
Let $b_i^r$ denote the active bids (maximum of bid for demand that is not satisfied) at BSs of operator $i$ in $r^{\rm th}$ iteration. 
The bid updation process is described in the Algorithm \ref{algo:sp_variable_bid}. We denote $b_i^{f} = b_i^L$, where $L$ is the
last iteration. Therefore, bid vector $b_i^f$ projects the bids at the BSs of operator $i$ for $(K+1)^{\rm th}$ iteration, where $K$ is the number 
of channels available.
The bid at BS $j$ of operator $i$ in vector $b_i^f$ is given as $b_{ij}^f = \mathcal{B}_i(x_{ij}^f +1,j)$, where $x_{ij}^f$ is final allocation of 
operator $i$ at BS $j$ or $j^{\rm th}$ component of $x_i^{f}$.
The vector $b_i^f$ has the highest bid
values corresponding to unsatisfied demand (non-increasing bid assumption) for operator $i$.

Let $d_i^f$, $i \in O$ denote final demand vector of the operator $i$ after the allocation process is complete. 
Here, $d_{ij}^f = 0$ signifies that the demand is satisfied  at $j^{th}$ BS of operator $i$. 
Furthermore, the set of BSs where demand is unsatisfied is indicated $S_i^f$ i.e., $S_i^f  = \{j | d_{ij}^f > 0\}$.
Based on $S_i^f $, final conflict graph $\mathcal{G}^f$ is obtained. $\mathcal{G}^f$ has BSs where demand is not satisfied. 

Let, $\Gamma_{j}^{i} = \mathcal{N}_i\bigcap S_j^f$ denote the BSs of operator $j$ in $\mathcal{G}^f$ which are in neighborhood of
BSs of operator $i$ in initial conflict graph $\mathcal{G}$. We define the critical operator  $C(i)$  any $j \in O_i$  such that 
\begin{equation}
 \label{eqn:cr_op2}
 \begin{split}
  \sum_{k \in \Gamma_j^i } b_{jk}^f \geq \sum_{k \in  \Gamma_{j'}^i}b_{j'k}^f,
 \quad \forall j' \neq j,~ j' \in O_i.
 \end{split}
 \end{equation}
 
For single channel auction, Equation (\ref{eqn:cr_op2}) reduces to Definition \ref{def: critical neighbor}. The only difference is that the
BSs where the demand is zero after allocation process are no longer part of the conflict graph $\mathcal{G}^f$. 
We compute valuation of operator $j$ which are not allocated channel $\chi_{j}^i $.
Critical operator valuation $\sigma_i^c$ is obtained using Equation (\ref{eqn:cr_p2}).
\begin{align}
\label{eqn:eq11}
\chi_{j}^i &= \sum_{k \in \Gamma_j^i } b_{jk}^f. \\
\label{eqn:cr_p2}
 \sigma_i^c &= \chi_{C(i)}^i.
\end{align}
The price charged from operator $i$ is $p_i = \sigma_i^c$. 
This price  reduces to the earlier critical operator valuation mentioned in Section \ref{sec:mec_design} for single channel scenario.

Here we define a new concept of {\it{weak strategy-proofness}}:
\begin{definition}\label{def:weak-sp}
Let $\mathcal{B}_i$ denote true valuation of operator $i$. An auction is said to be weakly strategy-proof if an operator does not gain by deviating
to $\mathcal{\tilde{B}}_i$  from  $\mathcal{B}_i$, where $\mathcal{\tilde{B}}_i$ satisfies either $(1)$ $\exists$ $j$ such that
$\mathcal{\tilde{B}}_i(\ell,j)  > \mathcal{B}_i(\ell,j)$, $\forall \ell$ 
or (2) $\exists$ $j$ such that $\mathcal{\tilde{B}}_i(\ell,j)  < \mathcal{B}_i(\ell,j)$,  $\forall \ell$. i.e.,
\begin{equation}
\label{eqn:weak-sp}
 \mathcal{U}_i(\mathcal{\tilde{B}}_i, \mathcal{B}_{-i}) \leq \mathcal{U}_i(\mathcal{B}_i,\mathcal{B}_{-i}) \quad \forall~\mathcal{\tilde{B}}_i \& \mathcal{B}_{-i}.
\end{equation}
where, $\mathcal{\tilde{B}}_i$ satisfy conditions $(1)$ or $(2)$  and 
$\mathcal{B}_{-i}= \{\mathcal{B}_{1},\ldots,\mathcal{B}_{i-1},\mathcal{B}_{i+1},\ldots,\mathcal{B}_{n}\}$ is tuple with bid of all other 
operators except operator $i$.
\end{definition}
\subsection{Example}
\label{eg:cg3}
We revisit the Example \ref{eg:cg2} in context of NUD-WSPAM. The wireless network is illustrated in Fig. \ref{eg2} is same except the  channel valuation at a 
BS is no longer linearly increasing with the demand. As stated earlier, per channel valuation is non-increasing
function of demand at any BS. An operator can bid for at most the number of channels available for auction at any BS. We 
consider demand vectors to be same as mentioned in the example previously. 
Let $q_{ij}$ represents the bid vector at BS $j$ of operator $i$ corresponding to its demand. 
The bid at BSs of operator $A$ are given as $\mathcal{B}_{A}= [q_{A1}^{\rm T} ~ q_{A2}^{\rm T} ~ q_{A3}^{\rm T} ~q_{A4}^{\rm T}]$,  
where $q_{A1} = [8~ ~5]$, $q_{A2} = [10~ ~0]$ and $q_{A3} = [7~~3]$ and $q_{A4} = [6 ~~ 3]$. Here, $a^{\rm T}$ indicate the transpose of $a$. 
The bid for operator $B$ is $\mathcal{B}_{B} = [q_{B1}^{\rm T}~ q_{\small{B2}}^{\rm T} ~ q_{B3}^{\rm T} ~q_{B4}^{\rm T}]$, where $q_{B1} = [8 ~4]$,
$q_{B2} = [9 ~0]$, $q_{B3} = [9 ~0]$ and $q_{B4} = [10 ~ 3]$. The bid for operator C is $\mathcal{B}_{C} = [q_{C1}^{\rm T}~ q_{C2}^{\rm T}]$, 
where $q_{C1} = [10 ~5]$, $q_{C2} = [9 ~0]$. 

\noindent \textbf{Case 1: All operators reveal their true valuations} \\
\noindent $\bullet$ \underline{Iteration 1}: From the given bid vectors, we determine the bids of operators for the allocation $\sigma_{A} = 31$,
$\sigma_{B} = 36$ and $\sigma_{C} = 19$. The channel is allocated at all the BSs of the highest bidding operator. Then channel is 
allocated to the BSs of the remaining operators in the order of decreasing valuations which do not conflict with the BSs 
that are already allocated channel. Therefore, the channel is allocated to operator $B$ at $\{B_1,B_2,B_3,B_4\}$ and operator $C$ at $\{C_1\}$ 
BSs.\\
\noindent $\bullet$ \underline{Iteration 2}: For second channel allocation, demand and bid vectors are updated depending on the allocation in 
previous iteration. The updated demand vectors are $d_{A} = [2~ 1~ 2~ 2]$, $d_{B} = [1 ~0 ~0 ~1] $ and $d_{C} = [1 ~1]$. The operators 
valuation for the iteration is determined from the updated bid $\sigma_{A} = 31$, $\sigma_{B} = 7$ and $\sigma_{C} = 19$. Channel is allocated to
operator $A$ at $\{A_1,A_2,A_3,A_4\}$ and operator $C$ at $\{C_2\}$ BSs.\\

This completes the channel allocation phase. Now, the demand at the operators is $d_{A} = [1~ 0~ 1~ 1]$, $d_{B} = [1~0~0 ~1] $ and $d_{C} = [1~0]$.

\begin{figure}[h]
 \centering
 \includegraphics[width = 0.30\linewidth]{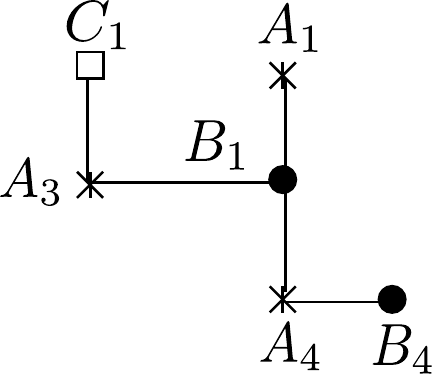}
 \caption{Updated conflict graph after channel allocation phase is complete.\label{eg1_last}}
\end{figure}

Price Charging Step: In Algorithm \ref{algo:sp_variable_bid}, the price is charged after all the channels are allocated based on the BSs where
demand is non-zero. We construct the conflict graph with the BSs having demand greater than zero as illustrated in Fig. \ref{eg1_last}. 
Each operator is charged as per their critical operator (see Definition \ref{def: critical neighbor}). The sum of the highest bids of the 
BSs $\{B_1, B_4\}$ of operator $B$ for which demand is not satisfied comprise the critical operator of operator $A$. Similarly, the bids
of BSs $\{A_1,A_3,A_4\}$ constitute the critical operator for operator $B$ and the bid at the BS $\{A_3\}$ is critical operator
bid for operator $C$. Thus, the price charged from operator $A$, $B$ and $C$ is given by $p_A = 7$, $p_B = 11$ and $p_C = 3$.

\noindent \textbf{Case 2: Operator $B$ deviates from true valuation and bids at a lower value} \\
Now, we revisit the wireless network mentioned in Fig. \ref{eg2}, considering that except the operator $B$ others submit bid equal to true value
for the associated BSs. The demand vector of all the operators remain unchanged as in the first case. We consider that operator bid is 
$\mathcal{B}_{B}^{\rm '} = [{q_{B1}^{\rm '}}^{\rm T},~ {q_{B2}^{\rm '}}^{\rm T},~ {q_{B3}^{\rm '}}^{\rm T}, {q_{B4}^{\rm '}}^{\rm T}]$, where $q_{B1}^{\rm '}
= (8, ~4)$, $q_{B2}^{\rm '} = (6,~ 0)$, $q_{B3}^{\rm '} = (6,~ 0)$ and $q_{B4}^{\rm '} = (9,~ 3)$. As described in Case $1$, channel is allocated
to the operators.

\noindent $\bullet$ \underline{Iteration 1}: Here, the operator bids for channel allocation are $\sigma_{A} = 31$, $\sigma_{B} = 29$ and
$\sigma_{C} = 19$. Operators $A$ and $C$ are allocated channel at the BSs $\{A_1,A_2,A_3,A_4\}$ and $\{C_2\}$.\\
\noindent $\bullet$ \underline{Iteration 2}: Second channel is allocated to BSs $\{B_1,B_2,B_3,B_4\}$ and $\{C_1\}$.

We can see that operator $B$ gets channel at their BSs in iteration $2$. Channel allocation remains same even after deviating from true
valuation. Next, we determine the price charged from the operators.
\par Price Charging Step: We update the conflict graph based on the remaining channel demand across the BSs of every operator as
illustrated in the Fig. \ref{eg1_last}. Then, we determine the price charged from every operator based on the critical operator. The price charged
remains the same as it is obtained for Case $1$ (operators reveal their true valuations).

From the above example, it is seen that the deviation from true valuation does not provide utility gain. Thus, operators have no incentive
in misreporting the true valuation. Hence, Algorithm \ref{algo:sp_variable_bid} is strategy-proof.

As we defined earlier, $\sigma_i^b(k) = \sum\limits_{j} b_{ij}^k$, where $b_{i}^{k}= [b_{i1}^k \ldots b_{im_i}^k]$ has the bids at which operator
$i$ demands channel at its BSs in $k^{\rm{th}}$ iteration of allocation.

\begin{lemma}
 Algorithm \ref{algo:sp_variable_bid} is individually rational.\label{lemma:rational2}
\end{lemma}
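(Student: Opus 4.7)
The goal is to verify $\mathcal{U}_i = V_i - p_i \geq 0$ when operator $i$ bids truthfully, where $V_i = \sum_{j=1}^{m_i}\sum_{\ell=1}^{x_{ij}^f}\mathcal{B}_i(\ell,j)$ is the total realized valuation of operator $i$ over the channels it receives and $p_i = \chi_{C(i)}^i = \sum_{k\in\Gamma_{C(i)}^i} b_{C(i),k}^f$ is the charged price. If $i$ is allocated no channel at any BS then $V_i=0$ and $p_i=0$ by Equation (\ref{eqn:eqn4}), so the inequality holds trivially; the plan is to handle the remaining case, in which $i$ wins at least one channel in some iteration of the outer loop of Algorithm \ref{algo:sp_variable_bid}.

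The first step is to exploit the non-increasing bid assumption $\mathcal{B}_j(\ell,k)\geq\mathcal{B}_j(\ell+1,k)$. Because $b_{jk}^r=\mathcal{B}_j(x_{jk}^{(r-1)}+1,k)$ and the counts $x_{jk}^{(r)}$ are non-decreasing in $r$, the sequence $\{b_{jk}^r\}_r$ is itself non-increasing, which gives $b_{C(i),k}^f \leq b_{C(i),k}^r$ for every iteration $r$ and every BS $k$. This reduces the task to bounding the iteration-$r$ quantity $\sum_{k\in\Gamma_{C(i)}^i}b_{C(i),k}^r$ by something controlled by operator $i$'s own bids, for a judiciously chosen $r$.

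The second step is to anchor the comparison at a distinguished iteration $r^\star$, taken to be the last outer-loop iteration in which $i$ is selected as $i^\ast$ inside Algorithm \ref{channel_allocation}. At the moment of that selection, $i$ has the maximum effective bid among all operators still in $\mathcal{G}^{\prime}$; moreover every $k\in\Gamma_{C(i)}^i$ is still present in $\mathcal{G}^{\prime}$ at that moment, because if $k$ had been removed earlier in the same sweep it could only have been as part of a previous winner's own BS set, which would have assigned $k$ an extra channel in iteration $r^\star$ and is inconsistent with $k\in S_{C(i)}^f$ once this observation is iterated across outer iterations. Applying the winning comparison then gives $\sigma_i^b(r^\star)\geq \sum_{k\in\Gamma_{C(i)}^i}b_{C(i),k}^{r^\star}$, and chaining with the non-increasing inequality yields $p_i\leq \sigma_i^b(r^\star)$. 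Since every BS that contributes to $\sigma_i^b(r^\star)$ receives an additional channel for $i$ in iteration $r^\star$, the bid $b_{ij}^{r^\star}=\mathcal{B}_i(x_{ij}^{(r^\star-1)}+1,j)$ appears as a summand of $V_i$, so accumulating over all iterations in which $i$ wins produces $V_i\geq \sigma_i^b(r^\star)\geq p_i$, as required.

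The main obstacle is the case in which operator $i$ is \emph{not} selected first in the Algorithm \ref{channel_allocation} sweep at iteration $r^\star$: some of $i$'s BSs may then be excised from $\mathcal{G}^{\prime}$ as neighbors of an earlier winner before $i$ itself is picked, so the residual bid used in the $\argmax$ comparison is smaller than the full $\sigma_i^b(r^\star)$ and $\Gamma_{C(i)}^i$ may also have contracted. The plan here is to iterate the contradiction used above on the list of prior winners in the same sweep: any $k\in\Gamma_{C(i)}^i$ removed before $i$'s turn would force $k$ to have received a channel in iteration $r^\star$, eventually contradicting $k\in S_{C(i)}^f$. Showing that the restricted winning inequality continues to hold on the pruned graph, and then tallying $V_i$ contributions over all iterations in which $i$ wins even partially, is the delicate bookkeeping; once that is in place the remaining inequalities are purely algebraic.
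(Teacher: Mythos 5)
Your overall skeleton is the same as the paper's: use the non-increasing-bid assumption to push the final-graph price $p_i=\chi^i_{C(i)}$ back to an iteration in which operator $i$ wins, invoke the winner's $\argmax$ comparison there (as in Lemma~\ref{lemma:rational1}), and dominate the result by the total realized bid $V_i=\alpha_i^b$ of the allocated channels. The paper compresses this into the chain $\sigma_i^c\le\sigma_i^c(k)\le\sigma_i^b(k)\le\sigma_i^b(1)\le\alpha_i^b$ and does not attempt the graph-presence bookkeeping that you make explicit.

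However, the step you use to justify that every $k\in\Gamma^i_{C(i)}$ is still present in $\mathcal{G}'$ when $i$ is selected is wrong. You argue that a BS removed earlier in the same sweep ``could only have been removed as part of a previous winner's own BS set,'' hence would have received a channel, contradicting $k\in S^f_{C(i)}$. But line~\ref{pc1:null} of Algorithm~\ref{channel_allocation} deletes $S_{i^*}\cup\mathcal{N}_{i^*}$: BSs are also excised as \emph{neighbors} of earlier winners, and such BSs receive no channel in that iteration. So a BS of $C(i)$ that conflicts both with $i$ and with an earlier winner of the same sweep can be pruned before $i$'s turn, finish the algorithm with unsatisfied demand, and therefore contribute its bid to $p_i$ (since $\Gamma^i_j=\mathcal{N}_i\cap S_j^f$ is taken with respect to the \emph{initial} conflict graph), while never appearing on the losing side of any winning comparison involving $i$. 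Your ``main obstacle'' paragraph iterates the same faulty contradiction, so it does not repair this; the delicate bookkeeping you defer is exactly where the argument currently breaks, and the companion worry you raise there (that $\sigma_i^b(r^\star)$ may itself overcount BSs of $i$ pruned before its turn, so that $V_i\ge\sigma_i^b(r^\star)$ also needs care) is likewise left open. To be fair, the paper's own proof silently asserts the analogous inequality $\sigma_i^c\le\sigma_i^c(k)$ without confronting this pruning issue, and even contains the stray line $p_i=\alpha_i^b-\sigma_i^c$ where $p_i=\sigma_i^c$ is evidently meant; but as submitted your argument has a concrete false step rather than merely an elided one.
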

\begin{proof}

As per the assumption, marginal bid per channel decreases with the demand $\ell$ at any BS i.e., 
$\mathcal{B}_i(\ell,j) \geq \mathcal{B}_i(\ell',j)$ for $\ell < \ell'$ for all operator $i$, BS $j$. Therefore, each operator bid is non-increasing 
sequentially in the allocation process i.e., $\sigma_i^{b}(k) \geq \sigma_i^{b}(k')$ for $k < k'$, where $k$ denotes channel allocation iteration.

Operator $i$ with maximum bid gets channel in each iteration. As stated in Algorithm \ref{algo:sp_variable_bid}, updated graph
$\mathcal{G}^f$ comprises BSs $\{s | s \in S_j^f, \forall j \}.$ Operator $i$ is charged as $\sigma_i^c = \max\limits_{j \neq i}\chi_j^i$. 
As proved in Lemma (\ref{lemma:rational1}), $\sigma_i^c(k) \leq \sigma_i^b(k)$, for all $k$. But, in Algorithm \ref{algo:sp_variable_bid}, 
$\sigma_i^c$ is determined from $\mathcal{G}^f$. Therefore, $\sigma_i^c \leq \sigma_i^c(k), \forall k$.

Let $x_i^{f}$ denote the final allocation vector for operator $i$. We denote the sum of the channel bids corresponding to allocation vector
$x_i^{f}$ is $\alpha_i^{b}$. Thus, $\alpha_i^{b}= \sum\limits_{j= 1}^{m_i} \sum\limits_{\ell= 1}^{x_{i}^{f}(j)} \mathcal{B}_i(\ell,j)$. Moreover,
$\alpha_i^{b} \geq \sigma_i^b(1)$, where $\sigma_i^b(1)$ is operator $i$ bid for first channel. As we know $\sigma_i^c < \sigma_i^b$, 
therefore $\sigma_i^c < \alpha_i^{b}$.
Now, the price charged is given by
 \begin{align*}
  p_{i} &= \alpha_i^{b}- \sigma_{i}^{c},\\
        &\leq \alpha_i^{b}.        \quad \quad \quad (\because 0 \leq \sigma_i^c \leq \alpha_i^{b}).
 \end{align*}
Thus, $ 0 \leq p_i \leq \alpha_i^{b}$. This proves that the Algorithm \ref{algo:sp_variable_bid} is individually rational.
\end{proof}
\begin{lemma}
 In Algorithm \ref{algo:sp_variable_bid}, suppose final allocation vectors of operator $i$ are $x_i^f$ and $\tilde{x}_i^f$ at bids 
 $(\mathcal{B}_i,\mathcal{B}_{-i})$ and $(\mathcal{\tilde{B}}_i,\mathcal{B}_{-i})$, respectively. If there exists some $j$
 $\mathcal{\tilde{B}}_i(\ell,j) > \mathcal{B}_i(\ell,j)$ ~ $\forall \ell$, then 
 $\tilde{x}_i^f -x_i^f \geq 0$. This implies that the number of channels allocated across the BSs of operator $i$ at $\mathcal{\tilde{B}}_i$ are 
 atleast equal to the number of channels allocated at $\mathcal{B}_i$.  
\label{lemma:monotone2}
\end{lemma}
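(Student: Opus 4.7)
The plan is to prove the claim by induction on the iteration count $k$ of the outer while loop of Algorithm~\ref{algo:sp_variable_bid}, maintaining the invariant that the cumulative allocation of operator $i$ under $\tilde{\mathcal{B}}_i$ dominates that under $\mathcal{B}_i$ (either in total or, ideally, componentwise at every BS of operator $i$). The base case is trivial because both cumulative allocation vectors start at zero. For the inductive step, I would couple the two parallel executions and, at iteration $k{+}1$, compare operator $i$'s effective bid. Under $\mathcal{B}_i$ its active bid at BS $j$ is $\mathcal{B}_i(x_{ij}^{(k)}+1,j)$, and under $\tilde{\mathcal{B}}_i$ it is $\tilde{\mathcal{B}}_i(\tilde{x}_{ij}^{(k)}+1,j)$. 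Provided the inductive hypothesis holds with equality at BS $j$, the assumption $\tilde{\mathcal{B}}_i(\ell,j)>\mathcal{B}_i(\ell,j)$ for all $\ell$ yields a strictly larger active bid at $j$, so Lemma~\ref{lemma:monotone1} applied to the inner call of Algorithm~\ref{channel_allocation} guarantees operator $i$ still wins this round whenever it wins in the baseline, preserving the invariant.

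The subtler case is when the two executions have already diverged before iteration $k{+}1$. Here I would argue that divergence can only come from operator $i$ having won strictly more under $\tilde{\mathcal{B}}_i$ at some BS than under $\mathcal{B}_i$ (so the inductive hypothesis is already in surplus), in which case losing the current round can at worst cancel part of that surplus and never push the total below. Concretely, any iteration where operator $i$ loses under $\tilde{\mathcal{B}}_i$ but wins under $\mathcal{B}_i$ must be preceded by a previous iteration where the opposite occurred, and a bookkeeping/matching argument shows the running total on the perturbed side never falls behind. This matching argument should combine the monotonicity in Lemma~\ref{lemma:monotone1} with the fact that the graph updation step in Algorithm~\ref{variable_demand}/Algorithm~\ref{algo:sp_variable_bid} removes a BS only when its demand is met, so once operator $i$ has pulled ahead at BS $j$, that BS leaves the conflict graph no later in the $\tilde{\mathcal{B}}_i$ execution than in the $\mathcal{B}_i$ execution.

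The main obstacle I anticipate is the cascading effect on other operators' effective bids. Because the bid updation step replaces $b_{i'j'}$ with $\mathcal{B}_{i'}(x_{i'j'}^{f}+1,j')$, every neighbour's marginal bid depends on how many channels that neighbour has already won, and that in turn depends on operator $i$'s own history. If operator $i$ wins earlier under $\tilde{\mathcal{B}}_i$, some neighbour wins less, so that neighbour's marginal bid stays higher in later rounds, which could cause operator $i$ to lose a round it would have won under $\mathcal{B}_i$. Controlling this cascade is the crux; I expect to handle it by strengthening the induction to a per-BS comparison $\tilde{x}_{ij}^{(k)}\ge x_{ij}^{(k)}$ for every $j$, then showing that whenever a neighbour's effective bid is elevated in the perturbed execution, operator $i$'s cumulative allocation already exceeds the baseline by at least one channel, so no new ``loss'' can reduce the total below the baseline. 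This is where a careful exchange/coupling argument, rather than a purely local per-iteration estimate, will likely be required.
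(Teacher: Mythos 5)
Your plan correctly isolates the real difficulty, but it does not resolve it, so as it stands this is a proof outline with an acknowledged hole rather than a proof. The crux you name --- that raising operator $i$'s bid makes it win earlier, which leaves a neighbour $j$ with an \emph{undiminished} marginal bid $\mathcal{B}_j(\ell,\cdot)$ in later rounds, which can then cost operator $i$ a round it would have won under $\mathcal{B}_i$ --- is exactly the step that needs a concrete argument, and you only say you ``expect'' to handle it by a per-BS strengthening and an exchange/coupling argument. That strengthening is itself in doubt: in Algorithm~\ref{channel_allocation} a winning operator is allocated at \emph{all} of its BSs still present in $\mathcal{G}'$, and the set of surviving BSs depends on the history of wins (demand-satisfied BSs leave the graph, and the graph-updation step changes which neighbours block whom), so the componentwise invariant $\tilde{x}_{ij}^{(k)}\ge x_{ij}^{(k)}$ is not obviously preserved and the ``debt can be repaid but never exceeded'' bookkeeping claim is asserted, not proved. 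Until you exhibit the actual matching between a lost round in the perturbed run and an earlier surplus round, the inductive step is open.

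For comparison, the paper's own proof is far more terse than your plan: it simply notes that bids are non-increasing in $\ell$, that allocation is greedy in $\sigma_i^b$, and concludes that since $\tilde{\mathcal{B}}_i \ge \mathcal{B}_i$ and $\mathcal{B}_{-i}$ is unchanged, operator $i$ must win in at least as many iterations. It never mentions the cascade through neighbours' marginal bids that you identified, so it implicitly assumes the two executions can be compared iteration-by-iteration with the opponents' effective bids held fixed --- which is precisely what the coupling breaks. In that sense your analysis is more honest than the paper's argument and exposes a gap in it; but to turn your proposal into a correct proof you would need to either supply the exchange argument in full or find a counterexample showing the per-BS version fails and only the aggregate count survives.
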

\begin{proof}
As per the assumption in Section \ref{sec:non-uniform},
$\mathcal{\tilde B}_{i}(\ell,j) \geq \mathcal{\tilde B}_{i}(\ell',j) $ such that $\ell < \ell'$ for all BSs $j$. 
Let operator $i$ be allocated channels in $k$ iterations in the allocation process at $\mathcal{B}_{i}$. As channel allocation is 
performed greedily based on the bid, with bid $\mathcal{\tilde B}_{i} \geq  \mathcal{B}_{i}$, it must be allocated at least $k$ iterations. 
Since $\mathcal{B}_{-i}$ is unchanged, operator $i$ may get a channel in  more than $k$ iterations, if increase in bid results in
$\sigma_i^b > \sigma_j^b$, for $j \neq i$ in more iterations in the allocation process. 
\end{proof}
\begin{theorem}
Algorithm \ref{algo:sp_variable_bid} is weakly strategy-proof \label{thm:thm3}.
\end{theorem}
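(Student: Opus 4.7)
The plan is to prove Theorem \ref{thm:thm3} by a case analysis on the two permitted deviation patterns in Definition \ref{def:weak-sp}. Throughout, I would use that the true utility of operator $i$ under the final allocation $x_i^f$ is $\sum_{j=1}^{m_i}\sum_{\ell=1}^{x_{ij}^f}\mathcal{B}_i(\ell,j) - p_i$ with the true marginal valuations $\mathcal{B}_i(\ell,j)$, and that the price $p_i = \sigma_i^c$ in Equations (\ref{eqn:eq11})--(\ref{eqn:cr_p2}) is a function purely of the bids of operators in $O_i$ restricted to BSs with unsatisfied demand in the final conflict graph $\mathcal{G}^f$. Thus operator $i$'s own declared bid affects $p_i$ only indirectly, through the shape that the iterative allocation in Algorithm \ref{algo:sp_variable_bid} imposes on $\mathcal{G}^f$.

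For deviation pattern~(1), $\mathcal{\tilde{B}}_i(\ell,j^*) > \mathcal{B}_i(\ell,j^*)$ for all $\ell$ at some BS $j^*$, Lemma \ref{lemma:monotone2} gives $\tilde{x}_i^f \geq x_i^f$ componentwise, so at most new channels are added. I would match each iteration in which operator $i$ newly wins a channel against the displaced neighbor whose bid now enters $\mathcal{G}^f$'s critical set, and show that the true marginal value $\mathcal{B}_i(\ell,j^*)$ gained at that iteration is weakly dominated by the corresponding increase in $\sigma_i^c$, because the displaced neighbor's marginal bid must have been at least $\mathcal{B}_i(\ell,j^*)$ (otherwise $i$ would already have won truthfully). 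Summing over the added channels yields $\mathcal{U}_i(\mathcal{\tilde{B}}_i,\mathcal{B}_{-i}) \leq \mathcal{U}_i(\mathcal{B}_i,\mathcal{B}_{-i})$.

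For deviation pattern~(2), $\mathcal{\tilde{B}}_i(\ell,j^*) < \mathcal{B}_i(\ell,j^*)$ for all $\ell$, the contrapositive of Lemma \ref{lemma:monotone2} yields $\tilde{x}_i^f \leq x_i^f$ at $j^*$. A symmetric pairing argument applies: each lost channel removes a true-value term $\mathcal{B}_i(\ell,j^*)$ from the utility and a corresponding critical-operator contribution from $p_i$. The fact that operator $i$ truthfully won those channels means its marginal true value was above the then-winning neighbour's bid, so the lost value weakly exceeds the price reduction, again giving a non-positive change in utility.

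The main obstacle is the global, one-shot nature of the pricing in Algorithm \ref{algo:sp_variable_bid}: $\sigma_i^c$ is evaluated only once, from $\mathcal{G}^f$, after all $K$ iterations terminate. A local perturbation at $j^*$ can cascade across iterations, reshuffling both $x_i^f$ and each set $\Gamma_{j'}^i = \mathcal{N}_i \cap S_{j'}^f$, so the pairing between newly won (or lost) channels and their exact contributions to $\tilde p_i - p_i$ requires careful iteration-by-iteration bookkeeping. The restriction in Definition \ref{def:weak-sp} to monotone single-BS deviations is what makes this bookkeeping tractable, because it confines the cascade to one coordinate of the bid vector and preserves the non-increasing-marginal-bid structure needed to sign the key inequality; this is also precisely why NUD-AM, which allowed mixed-direction deviations to exploit its iterative pricing, failed to be strategy-proof in the example of Section \ref{sec:non-uniform_nsp}.
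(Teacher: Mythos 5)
Your proposal follows essentially the same route as the paper's proof in Appendix B: a case analysis on whether the deviation raises or lowers the bid, Lemma \ref{lemma:monotone2} to rule out the allocation shrinking (resp.\ growing) under an upward (resp.\ downward) deviation, and, in the remaining nontrivial case, the observation that any newly won channel was taken from a neighbor whose bid exceeds your true marginal value and whose displaced BSs now enter the critical-operator price via $\mathcal{G}^f$. The only differences are cosmetic — you organize the cases by the per-BS deviation patterns of Definition \ref{def:weak-sp} while the paper compares the aggregate bid $\beta_i^{\rm t}$ against the aggregate true value $\sigma_i^{\rm t}$ — and you are candid about the iteration-by-iteration bookkeeping that the paper's Scenario 1, Case (ii) likewise asserts rather than carries out.
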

\begin{proof}
Refer Appendix $B$.
\end{proof}
 
\section{Summary of the Proposed Mechanisms}
\label{sec:summary}
In this section, we summarize the key features (strategy-proofness and computational 
complexity) of the proposed mechanisms and the various scenarios in Table \ref{tab:summary}.

\begin{table}[h]
\caption{Summary}
\centering
\begin{tabular}{| p{1.3cm} | p {3.5cm}| p{1cm}| p{1.6cm}|}
 \hline
  \textbf{Algorithms} & \textbf{Scenario} &  \textbf{Strategy-proof} & \textbf{Computational Complexity} \\
 \hline
  SC-SPAM & Single channel, Uniform demand  &  Strong & $\mathcal{O}(mn^2)$ \\ 
 \hline
  NU-AM & Multi-channel, Non-uniform demand,linearly  bid & No& $\mathcal{O}(K\cdot mn^2)$\\
 \hline
 NUD-WSPAM & Multi-channel, Non-uniform demand, non-linear bid  & Weak&  $\mathcal{O}(K\cdot mn^2)$\\ 
 \hline
  
\end{tabular}
\label{tab:summary}
\end{table}
 In above Table, $n$ is the number of operators, $m = \sum\limits_{i=1}^{n} m_i$ is the total number of BSs across all the operators 
 present in the region, and $K$ is the number of channels available for auction. The detailed computation complexity analysis of SC-SPAM is 
 presented in \cite{multiple_op_sp}. For $K$ channel availability, computational complexity becomes $\mathcal{O}(K\cdot mn^2)$ using similar
 analysis as given for SC-SPAM.

\section{Simulation Results}
\label{sec:sim_results}
In this section, we evaluate the performance of the proposed algorithms in multi-operator settings in a wireless network. 
In the simulations, we consider $3$ operators providing services in a  
region. We model the wireless network by creating conflict graphs $\mathcal{G} = (V,\mathcal{E})$ using the configuration model~\cite{chung2002connected}.
To create overall topology of the wireless network in a given region, we first generate three conflict graphs $\mathcal{G}_{12}$,
$\mathcal{G}_{13}$, $\mathcal{G}_{23}$. Here, $\mathcal{G}_{ij}$  represents conflict graph among the BSs of operators $i$ and $j$.
Using the conflict graphs, we obtain corresponding binary interference matrices 
$\mathcal{I}_{12}$, $\mathcal{I}_{13}$ and $\mathcal{I}_{23}$, where $\mathcal{I}_{ij}$ represents the interference among the BSs of 
operator $i$ and operator $j$. 
In an interference matrix, $1$ indicates interfering pair of BSs.
Further, we obtain interference matrices $\mathcal{I}_{ji}$ from the transpose of matrix $\mathcal{I}_{ij}$.
The overall interference matrix $\mathcal{I}$ of wireless access network in the region is obtained
using  $\mathcal{I}_{12}$, $\mathcal{I}_{13}$, $\mathcal{I}_{23}$, $\mathcal{I}_{21}$, $\mathcal{I}_{31}$, and $\mathcal{I}_{32}$.
We perform Monte Carlo simulations for various scenarios. All the results are obtained by averaging over $50$ different topologies. All simulations are 
performed in MATLAB \cite{matlab}.

 We evaluate the performance of the algorithms based on the following parameters:
 \begin{itemize}
\item Spectrum utilization: It is defined as the total number of BSs which are assigned channels across all the operators, i.e., 
$\sum\limits_{i=1}^{n}\sum\limits_{j=1}^{m_i} x_{ij}$, where $x_{ij}$ denote the allocation at $j^{\rm th}$ BS of operator $i$.
\item Social welfare: Social welfare is defined as the sum of the bids corresponding to the BSs which are allocated channels. 
\end{itemize}

We compare the proposed algorithms with VCG \cite{roughgarden2016twenty} and SMALL \cite{small} mechanisms.
As discussed earlier, VCG mechanism chooses an allocation with the highest social welfare (optimal) from the set of 
all the feasible allocations. 
SMALL  groups the non-conflicting BSs together and determines the group valuation for each group. 
The group valuation is obtained as the number of BSs with bid greater than the minimum bid of the group times the minimum 
bid. Channel is allocated to the highest bidding group and all the BSs except the one with minimum bid are charged with the minimum bid in
the group. 

\subsection{Performance evaluation of SC-SPAM}
The bids across the BSs are uniformly distributed in the interval $[15, 25]$ for each operator. As VCG becomes computationally intractable for 
large networks, we restrict our simulations to small size networks which vary from $6$ to $21$ BSs. In this case, single channel is available in 
spectrum database. In Fig. \ref{perform_comp}, we observe that the social welfare and the spectrum 
utilization of SC-SPAM are close to the optimal obtained from VCG. However, SC-SPAM outperforms SMALL in both spectrum utilization and 
social welfare. 
\begin{figure}[h]
    \centering
    \begin{subfigure}[b]{0.425\linewidth}
        \centering
       \includegraphics[width=\textwidth]{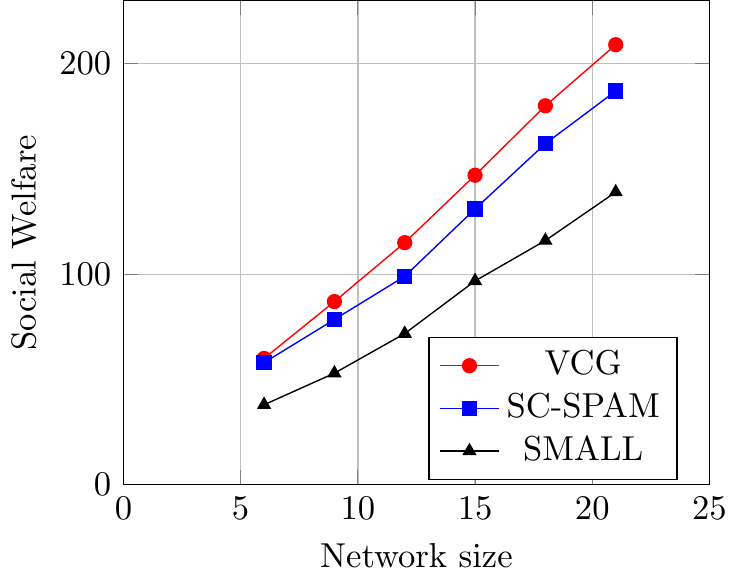}
        \caption{{Social Welfare}}
         \label{fig:sw_small_nw}
        \end{subfigure}%
    ~
    \begin{subfigure}[b]{0.425\linewidth}
        \centering
       \includegraphics[width=\textwidth]{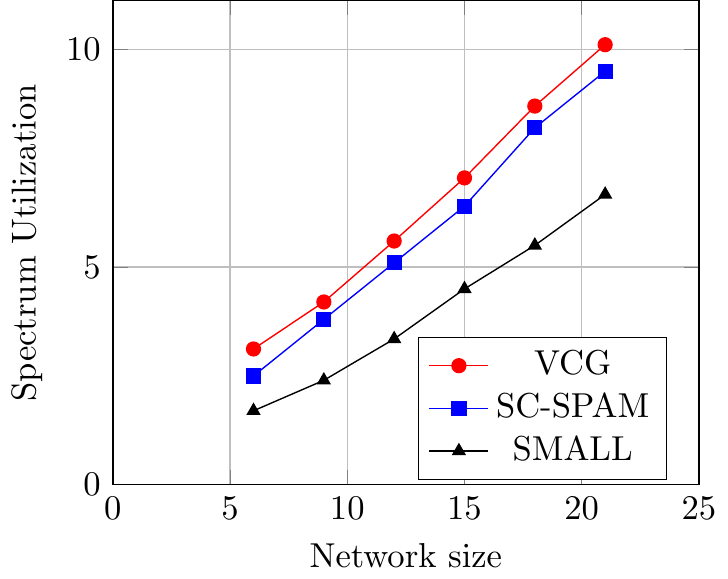}
        \caption{{Spectrum utilization}}
        \label{fig:su_small_nw}
        \end{subfigure}
        \caption{Performance comparison of the VCG, SC-SPAM and SMALL in three operator scenario.}
\label{perform_comp}
 \end{figure}
\vspace{-0.3cm}
\subsection{Performance evaluation of SC-SPAM for multiple channels}
Next, we compare the performance of SC-SPAM with SMALL \cite{small} in large networks with the number of BSs ranging from $30$ to $300$. We 
consider that
$3$ channels are available in the spectrum database. Each BS has a demand of $2$ channels for all the operators. Each operator submits a 
bid vector which has per channel valuation at every base station. The operators choose bids uniformly between $[10,25]$. From Fig. 
\ref{fig:uniform}, we observe that the performance of the proposed mechanism for multiple channel allocation is better than that of SMALL. Here, 
spectrum utilization is determined as the total number of channels allocated across the BSs of all the operators. The trend observed 
justifies the following facts: First, SMALL sacrifices the BSs with minimum bid to achieve strategy-proofness, resulting in lower social welfare. 
Second, BSs only in winning groups are allocated channel, even though there may be some BSs which do not conflict with the winning BSs. Further,
it is seen that the performance of SMALL degrades  with an increase in the number of  BSs in the region.
\begin{figure}[h]
    \centering
    \begin{subfigure}[b]{0.425\linewidth}
        \centering
       \includegraphics[width=\textwidth]{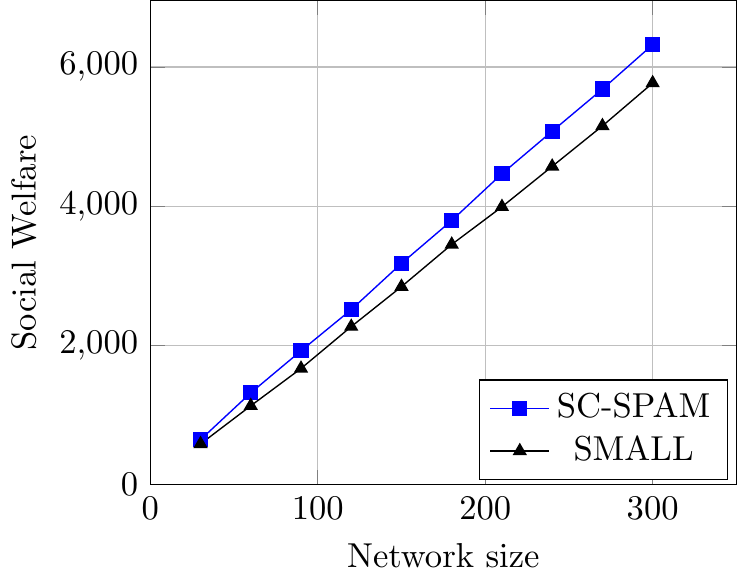}
        \caption{{Social Welfare}}
        \label{fig:sw2_uniform}
        \end{subfigure}%
    ~~
    \begin{subfigure}[b]{0.425\linewidth}
        \centering
       \includegraphics[width=\textwidth]{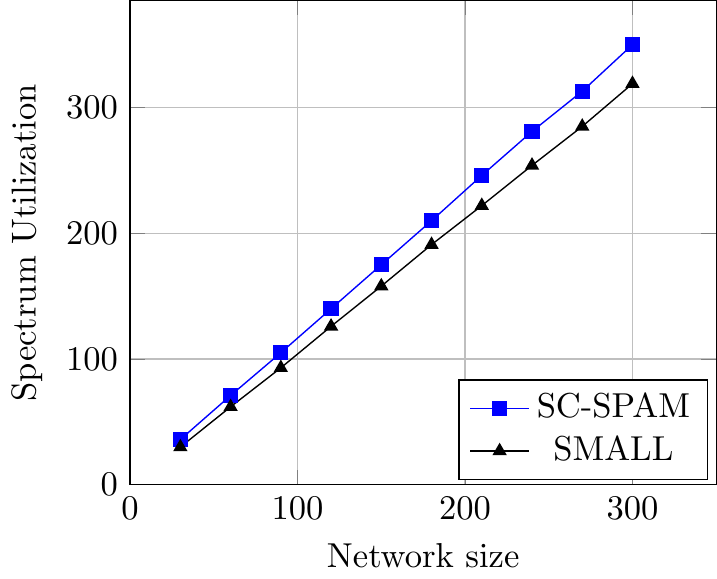}
        \caption{{Spectrum utilization}}
        \label{fig:su2_uniform}
        \end{subfigure}
        \caption{Performance comparison for uniform demand $d =2$ and $3$ number of available channels for auction across the BSs of multiple 
        operators.}
\label{fig:uniform}
 \end{figure}
 \vspace{-0.3cm}
\subsection{Performance evaluation of NUD-WSPAM} 
 We consider channel demand at any BS to be function of the traffic in the cell. 
 The demand at any BS is uniformly distributed in the interval $[0,3]$.
 We perform simulations to evaluate the number of channels required to satisfy the demands across the BSs for all operators in the region. 
 In Fig.\ref{fig:c_vs_su_nu}, we observe that the number of channels required to fulfill the demand for all the operators shows similar trend 
 irrespective of the number of BSs. The number of channels required for the wireless network of $150$ BSs remains same as that of
 $300$ BSs. 
 The reason for this behavior is that the degree distribution of BSs does not change with the size of the network (number of BSs).
 \begin{figure}[h]
 \centering
 \includegraphics[width = 0.425\linewidth]{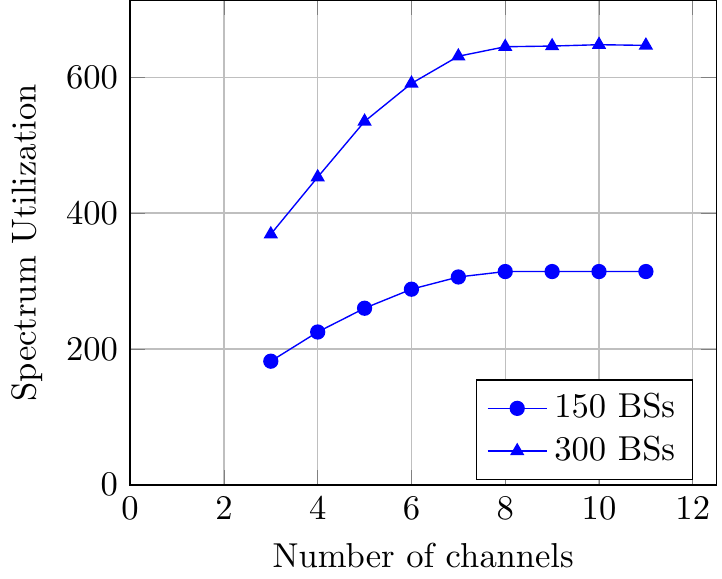}
 \caption{Comparison of spectrum utilization and number of channels for NUD-WSPAM in large networks\label{fig:c_vs_su_nu}.}
\end{figure}
\vspace{-0.3cm}

\section{Conclusions}
\label{sec:conclude}
In this paper, we have investigated the problem of spectrum allocation at operator level, for mutiple operator co-existence in a region. We consider 
multiple base stations to be associated with an operator to provide services to the end users. Therefore, an operator
has demand and valuation corresponding to each BS associated with it. 
To address the issue of multiple valuations at an operator, we have modeled the spectrum allocation problem among non-cooperative operators in 
multi-parameter environment with an objective of maximizing the social welfare of the system.
First, we propose a strategy-proof mechanism for single channel demand across BSs of co-existing operators.
Then we extend it for multiple channels considering non-uniform
demand across the BSs of the operators. 
We prove that the mechanisms SC-SPAM and NUD-WSPAM are guaranteed to be strategy-proof and weakly strategy-proof, respectively.
The performances of the proposed algorithms are evaluated using Monte Carlo simulations and compared with those of the other existing
mechanisms. The
performances of the proposed mechanisms are near optimal in terms of spectrum utilization and social welfare.
 Furthermore,  the analysis of computational complexity reveals that the proposed mechanisms are implementable in large networks in real
time scenarios. Thus, the proposed mechanisms solve the issue of intractability arising in VCG mechanism.

\section*{Appendix}
\subsection{Proof of Theorem~\ref{thm:thm1}}\label{app:a}

To show the strategy-proofness of the algorithm, possible scenarios can be divided into two categories:\\
\textit{\underline{Scenario 1}} : A operator $i$ tries to deviate from truthfulness by bidding greater than the true valuation, i.e., 
$\sigma_i^{b} > \sigma_i^{v}$.\\
 \textit{Case (i)}: Operator $i$ does not win the channel even after bidding untruthfully at $\sigma_i^{b}$, greater than  $\sigma_i^{v}$.
 Hence, it will have utility, $\mathcal{U}_i = 0$.\\
 \textit{Case (ii)}: Operator $i$ wins the channel at its bidding valuation $\sigma_i^{b}$ (which is greater than the true valuation) 
 as well as its true valuation $\sigma_i^{v}$. It will have positive utility, $\mathcal{U}_i = \sigma_i^{v} - p_i$, which is same as in the case 
 operator bids at the true valuation. Thus, bidding at higher valuation does not lead to any extra incentive.\\
 \textit{Case (iii)}: Operator $i$ wins channel at $\sigma_i^{b}$, but looses at $\sigma_i^{v}$. Here, Operator $i$ gets channel on higher
  bid (by misreporting) which is greater than its critical operator bid (Algorithm \ref{channel_allocation}). But, it has to pay higher price 
  which results in negative utility.
  
 \begin{equation*}
 \begin{split}
   \mathcal{U}_i &= \sigma_i^{v} - p_i,\\
   &= \sigma_i^{v} - \sigma_i^{c} \quad \text{where } p_i = \sigma_i^{c},\\
   &\leq 0. \quad (\because \sigma_i^{v} < \sigma_i^{c}).
  \end{split}
 \end{equation*}
  

\noindent\textit{\underline{Scenario 2}} : Operator $i$ tries to deviate from truthfulness by bidding less than the true valuation, i.e., 
$\sigma_i^{b} < \sigma_i^{v}$.\\
 \textit{Case (i)}: Operator $i$ looses the channel at $\sigma_i^{b}$ as well as its true valuation, $\sigma_i^{v}$. Thus, it will have 
 $\mathcal{U}_i = 0$.\\
  \textit{Case (ii)}: Operator $i$ wins the channel at $\sigma_i^{b}$ as well as its true valuation, $\sigma_i^{v}$ which follows from 
 monotonicity. Thus, it will have $\mathcal{U}_i = \sigma_i^{v} - p_i$.\\
 \textit{Case (iii)} : Operator $i$ looses at $\sigma_i^{b}$, but wins bidding at $\sigma_i^{v}$. Thus, the operator suffers loss by deviating
  to untruthful value with zero utility. However, bidding at $\sigma_i^{v}$ results in  channel allocation  to operator $i$ with non-negative 
  utility $\mathcal{U}_i = \sigma_i^{v} - p_i$.

From the above scenarios, it can be seen that bidding at $\sigma_i^{b} \neq \sigma_i^{v}$, does not improve the utility of operator. 
Thus, $\sigma_i^{b} = \sigma_i^{v}$ is the \textit{weakly dominant strategy} for operator $i$. This completes the proof.

\subsection{Proof of Theorem~\ref{thm:thm3}}\label{app:b}

 To prove the strategy-proofness, we are required to show that the deviation from the true valuation for any operator can never increase the 
 utility.
 We consider two scenarios: $(1)$ if an operator bids at a value higher than the true value, and $(2)$ if an operator bids at a value less than
 the true valuation.\\
Let $\sigma_i^{\rm t}$ denote the sum of the true valuations at the BSs of operator $i$ for the allocated channels.\\
Let $\beta_i^{\rm t}$ denote the sum of the bids of the channels allocated across the BSs of operator $i$.\\
Critical valuation, utility and final conflict graph at $\beta_i^{\rm t}$ are denoted as $\tilde{\sigma}_i^c$, $\tilde{U}_i$ and 
$\mathcal{\tilde{G}}^f$, respectively.
Let $x_i^f$ and $\tilde{x_i}^f$ denote the final allocation vector of operator $i$ with bids $\sigma_i^{\rm t}$ and $\beta_i^{\rm t}$, respectively.
Further, we define $\tilde{x_i}^f > x_i^f$, if $\exists$ at least a BS $\ell$ such that $\tilde{x_i}^f(\ell)> x_i^f(\ell)$.\\
 \noindent \textit{\underline{Scenario $1$}} : The operator bid is more than the true valuation of the channels allocated at its BSs,
 $\sigma_i^{\rm t} < \beta_i^{\rm t}$.
 Here, again we may have following cases:\\
 \textit{Case $(i)$}: Final allocation vector for all operators remains unchanged i.e., $\tilde{x_i}^f = x_i^f$,
 $\forall$ $i$.  
 Therefore, $C(i)$ and $\sigma_i^c$ for operator $i$ remains the same even at $\beta_i^{\rm t}$.
 Hence, operator utility $\mathcal{U}_i$ remains the same.\\
 \textit{Case $(ii)$}: Operator $i$ is allocated more number channels i.e., $\tilde{x_i}^f > x_i^f$. Since supply is limited, number of 
 channels allocated to some operators other than $i$ decreases i.e.,
 $\tilde{x_j}^f < x_j^f$ such that $j \neq i$.
 Let us say, operator $i$ is allocated extra channels in iteration $k$. Then, $\sigma_i^b(k) > \sigma_j^b(k)  > \sigma_i^v(k) $ for $j \neq i$. 
 However, at true value unsatisfied BSs of operator $i$ are not allocated channel and are present in $\mathcal{G}^f$. Due to untruthful bidding 
 of operator $i$, $\mathcal{\tilde{G}}^f$ comprise of the BSs of operator $j$ with higher aggregate true valuation. Therefore,  $\tilde{\sigma}_i^c > \sigma_i^c$, this implies 
 $\tilde{\mathcal{U}}_i < \mathcal{U}_i$. Hence, deviation from true value does not increases utility of operator $i$. 
\\
 \textit{Case $(iii)$}: Operator $i$ is allocated less number channels i.e., $\tilde{x_i}^f < x_i^f$.
 This is not possible due to monotonicity (Lemma \ref{lemma:monotone2}).
 
 \noindent \textit{\underline{Scenario $2$}} : The operator bid is less than the true valuation of the channels allocated at its BSs, i.e.,
 $\sigma_i^{\rm t} > \beta_i^{\rm t}$.\\
  \textit{Case $(i)$}: The number of channels allocated across the BSs and the final allocation vector remains unchanged.\\
  With the similar argument as in Case $(a)$ of Scenario $1$. The utility of the operator $i$ does not change.\\
  \textit{Case $(ii)$}: Operator $i$ is allocated more number channels i.e., $\tilde{x_i}^f > x_i^f$. 
  This is not possible due to monotonicity (see Lemma \ref{lemma:monotone2}).\\
  \textit{Case $(iii)$}: Operator $i$ is allocated less number channels i.e., $\tilde{x_i}^f < x_i^f$.
  As the number of channels allocated decrease on deviation from the $\sigma_{i}^{t}$, operator $i$ suffers loss.
  
 Thus, we establish that the deviation from true valuation does not lead to utility gain. Therefore, the proposed algorithm 
  is weakly strategy-proof.

\bibliographystyle{IEEEtran}
\bibliography{final_refer}

\begin{thebibliography}{10}
\providecommand{\url}[1]{#1}
\csname url@samestyle\endcsname
\providecommand{\newblock}{\relax}
\providecommand{\bibinfo}[2]{#2}
\providecommand{\BIBentrySTDinterwordspacing}{\spaceskip=0pt\relax}
\providecommand{\BIBentryALTinterwordstretchfactor}{4}
\providecommand{\BIBentryALTinterwordspacing}{\spaceskip=\fontdimen2\font plus
\BIBentryALTinterwordstretchfactor\fontdimen3\font minus
  \fontdimen4\font\relax}
\providecommand{\BIBforeignlanguage}[2]{{%
\expandafter\ifx\csname l@#1\endcsname\relax
\typeout{** WARNING: IEEEtran.bst: No hyphenation pattern has been}%
\typeout{** loaded for the language `#1'. Using the pattern for}%
\typeout{** the default language instead.}%
\else
\language=\csname l@#1\endcsname
\fi
#2}}
\providecommand{\BIBdecl}{\relax}
\BIBdecl

\bibitem{multiple_op_sp}
I.~Yadav, A.~A. Kulkarni, and A.~Karandikar, ``{Strategy-Proof Spectrum
  Allocation among Multiple Operators},'' in \emph{IEEE WCNC}, 2019, pp. 1--6.

\bibitem{ericsson}
K.~Barboutov, A.~Furusk{\"a}r, R.~Inam, P.~Lindberg, K.~Ohman, J.~Sachs,
  R.~Sveningsson, J.~Torsner, and K.~Wallstedt, ``{Ericsson Mobility Report},''
  \emph{Stockholm, Sweden, Tech. Rep}, 2018.

\bibitem{paul}
U.~Paul, A.~P. Subramanian, M.~M. Buddhikot, and S.~R. Das, ``{Understanding
  Traffic Dynamics Cellular Data Networks},'' in \emph{IEEE INFOCOM}, 2011, pp.
  882--890.

\bibitem{traffic_characterization}
H.~Wang, J.~Ding, Y.~Li, P.~Hui, J.~Yuan, and D.~Jin, ``{Characterizing the
  Spatio-Temporal Inhomogeneity of Mobile Traffic in Large-scale Cellular Data
  Networks},'' in \emph{ACM HOTSPOT}, 2015, pp. 19--24.

\bibitem{krishna2009auction}
V.~Krishna, \emph{Auction Theory}.\hskip 1em plus 0.5em minus 0.4em\relax
  Academic press, 2009.

\bibitem{roughgarden2016twenty}
T.~Roughgarden, \emph{Twenty Lectures on Algorithmic Game Theory}.\hskip 1em
  plus 0.5em minus 0.4em\relax Cambridge University Press, 2016.

\bibitem{ebay}
X.~Zhou, S.~Gandhi, S.~Suri, and H.~Zheng, ``{eBay in the Sky: Strategy-proof
  Wireless Spectrum Auctions},'' in \emph{ACM ICMCN}, 2008, pp. 2--13.

\bibitem{vickrey}
W.~Vickrey, ``{Counterspeculation, Auctions, and Competitive Sealed Tenders},''
  \emph{The Journal of finance}, vol.~16, no.~1, pp. 8--37, 1961.

\bibitem{clarke}
E.~H. Clarke, ``{Multipart Pricing of Public Goods},'' \emph{Public choice},
  vol.~11, no.~1, pp. 17--33, 1971.

\bibitem{groves}
T.~Groves, ``{Incentives in Teams},'' \emph{Econometrica: Journal of the
  Econometric Society}, vol.~41, no.~4, pp. 617--631, 1973.

\bibitem{small}
F.~Wu and N.~Vaidya, ``{SMALL: A Strategy-proof Mechanism for Radio Spectrum
  Allocation},'' in \emph{IEEE INFOCOM}, 2011, pp. 81--85.

\bibitem{trust}
X.~Zhou and H.~Zheng, ``{TRUST: A General Framework for Truthful Double
  Spectrum Auctions},'' in \emph{IEEE INFOCOM}, 2009, pp. 999--1007.

\bibitem{satya}
I.~A. Kash, R.~Murty, and D.~C. Parkes, ``{Enabling Spectrum Sharing in
  Secondary Market Auctions},'' \emph{IEEE Transactions on Mobile Computing},
  vol.~13, no.~3, pp. 556--568, 2014.

\bibitem{matlab}
``{MATLAB Toolbox},'' \url{https://in.mathworks.com}.

\bibitem{gandhi2008towards}
S.~Gandhi, C.~Buragohain, L.~Cao, H.~Zheng, and S.~Suri, ``{Towards Real-time
  Dynamic Spectrum Auctions},'' \emph{Computer Networks}, vol.~52, no.~4, pp.
  879--897, 2008.

\bibitem{ji2007cognitive}
Z.~Ji and K.~R. Liu, ``{Cognitive Radios for Dynamic Spectrum Access-Dynamic
  Spectrum Sharing: A Game Theoretical Overview},'' \emph{IEEE Communications
  Magazine}, vol.~45, no.~5, 2007.

\bibitem{clemens2005intelligent}
N.~Clemens and C.~Rose, ``{Intelligent Power Allocation Strategies in an
  Unlicensed Spectrum},'' in \emph{IEEE Symposium DySPAN}, 2005, pp. 37--42.

\bibitem{etkin2007spectrum}
R.~Etkin, A.~Parekh, and D.~Tse, ``{Spectrum Sharing for Unlicensed Bands},''
  \emph{IEEE Journal on Selected Areas in Communications}, vol.~25, no.~3, pp.
  517--528, 2007.

\bibitem{qiu2016demand}
J.~Qiu, Q.~Wu, Y.~Xu, Y.~Sun, and D.~Wu, ``{Demand-Aware Resource Allocation
  for Ultra-dense Small Cell Networks: An Interference-separation
  Clustering-based Solution},'' \emph{Transactions on Emerging
  Telecommunications Technologies}, vol.~27, no.~8, pp. 1071--1086, 2016.

\bibitem{subramanian2008near}
A.~P. Subramanian, M.~Al-Ayyoub, H.~Gupta, S.~R. Das, and M.~M. Buddhikot,
  ``{Near-Optimal Dynamic Spectrum Allocation in Cellular Networks},'' in
  \emph{IEEE Symposium DySPAN}, 2008, pp. 1--11.

\bibitem{gopinathan}
A.~Gopinathan, Z.~Li, and C.~Wu, ``{Strategyproof Auctions for Balancing Social
  Welfare and Fairness in Secondary Spectrum Markets},'' in \emph{IEEE
  INFOCOM}, 2011, pp. 3020--3028.

\bibitem{revenue}
J.~Jia, Q.~Zhang, Q.~Zhang, and M.~Liu, ``{Revenue Generation for Truthful
  Spectrum Auction in Dynamic Spectrum Access},'' in \emph{ACM Symposium
  MOBIHOC}, 2009, pp. 3--12.

\bibitem{myerson}
R.~B. Myerson, ``{Optimal Auction Design},'' \emph{Mathematics of Operations
  Research}, vol.~6, no.~1, pp. 58--73, 1981.

\bibitem{promise}
D.~Yang, X.~Zhang, and G.~Xue, ``{PROMISE: A Framework for Truthful and Profit
  Maximizing Spectrum Double Auctions},'' in \emph{IEEE INFOCOM}, 2014, pp.
  109--117.

\bibitem{tames}
Y.~Chen, J.~Zhang, K.~Wu, and Q.~Zhang, ``{TAMES: A Truthful Double Auction for
  Multi-demand Heterogeneous Spectrums},'' \emph{IEEE Transactions on Parallel
  and Distributed Systems}, vol.~25, no.~11, pp. 3012--3024, 2014.

\bibitem{kasbekar}
G.~S. Kasbekar and S.~Sarkar, ``{Spectrum Auction Framework for Access
  Allocation in Cognitive Radio Networks},'' \emph{IEEE/ACM Transactions on
  Networking}, vol.~18, no.~6, pp. 1841--1854, 2010.

\bibitem{yi2015multi}
C.~Yi and J.~Cai, ``{Multi-item Spectrum Auction for Recall-based Cognitive
  Radio Networks with Multiple Heterogeneous Secondary Users},'' \emph{IEEE
  transactions on Vehicular Technology}, vol.~64, no.~2, pp. 781--792, 2015.

\bibitem{adaptivechannel}
F.~Wu, T.~Zhang, C.~Qiao, and G.~Chen, ``{A Strategy-Proof Auction Mechanism
  for Adaptive-Width Channel Allocation in Wireless Networks},'' \emph{IEEE
  Journal on Selected Areas in Communications}, vol.~34, no.~10, pp.
  2678--2689, 2016.

\bibitem{chandra2008case}
R.~Chandra, R.~Mahajan, T.~Moscibroda, R.~Raghavendra, and P.~Bahl, ``{A Case
  for Adapting Channel Width in Wireless Networks},'' in \emph{ACM SIGCOMM
  computer communication review}, vol.~38, no.~4, 2008, pp. 135--146.

\bibitem{chung2002connected}
F.~Chung and L.~Lu, ``{Connected Components in Random Graphs with given
  Expected Degree Sequences},'' \emph{Annals of combinatorics}, vol.~6, no.~2,
  pp. 125--145, 2002.

\end{thebibliography}

\end{document}